\documentclass[12pt]{amsart}
\usepackage{amsmath,amssymb,amsfonts,amsthm,amsopn}
\usepackage{pb-diagram}
\setlength{\textwidth}{13,5cm}
\setlength{\textheight}{20cm}
\setlength{\oddsidemargin}{0pt}
\setlength{\evensidemargin}{0pt}
\setlength{\textwidth}{148 mm}   


\newcommand{\tfa}{time-frequency analysis}

\newcommand{\tf}{time-frequency}

\newcommand{\modsp}{modulation space}

\newtheorem{theorem}{Theorem}[section]
\newtheorem{lemma}[theorem]{Lemma}
\newtheorem{corollary}[theorem]{Corollary}
\newtheorem{proposition}[theorem]{Proposition}
\newtheorem{definition}[theorem]{Definition}

\newtheorem{remark}[theorem]{Remark}

\theoremstyle{definition}

\newcommand{\beqa}{\begin{eqnarray*}}
\newcommand{\eeqa}{\end{eqnarray*}}

\newcommand{\field}[1]{\mathbb{#1}}
\newcommand{\bR}{\field{R}}        
\newcommand{\bN}{\field{N}}        
\newcommand{\bZ}{\field{Z}}        
        %
        %



\def\G{\mathcal{G}}

\def\la{\lambda}

\def\th{\theta}


\def\cS{\mathcal{S}}

\def\cG{\mathcal{G}}
\def\cM{\mathcal{M}}

\def\cC{\mathcal{C}}

\def\a{\alpha}

\def\rd{\bR^n}

\def\rdd{{\bR^{2n}}}

\def\lrd{L^2(\rd)}

\def\zd{\bZ^n}

\def\intrd{\int_{\rd}}
\def\intrdd{\int_{\rdd}}

\def\R{\Big))}

\def\<{\left<}
\def\>{\Big)>}

\def\mv1{M_v^1}

\def\Mmpq{M_m^{p,q}}
\def\phas{(x,\xi)}
\def\mn{(m,n)}
\def\mn'{(m',n')}


\hyphenation{Cara-theo-do-ry}
\hyphenation{Dau-be-chies}
\hyphenation{Barg-mann}
\hyphenation{dis-tri-bu-ti-ons}
\hyphenation{pseu-do-dif-fe-ren-tial}
\hyphenation{ortho-normal}



\def\a{\alpha}
\def\b{\beta}

\def\t{\tau}

\def\R{\mathbb{R}}
\def\Ren{\mathbb{R}^n}
\def\Renn{\mathbb{R}^{2n}}

\def\sch{\mathcal{S}}

\def\Tau{\mathcal{T}}
\def\tauhz0{\widehat{\mathcal{T}}^\hbar(z_0)}
\def\tauhz{\widehat{\mathcal{T}}^\hbar(z)}

\def\Sn2{S_{2}(L^{2}(\Ren))}
\def\S1{S_{1}(L^{2}(\Ren))}
\def\sig00{\sigma_{0,0}}
\def\th{\widehat{\mathcal{T}}^\hbar}
\def\t{\widehat{\mathcal{T}}}
\def\st{\widehat{S_t}}
\def\sht{\widehat{S_t}^\hbar}


\def\la{\langle}
\def\ra{\rangle}




\newcommand{\Ha}{\widehat{H(t)}} 

\begin{document}
\title {Semi-classical Time-frequency Analysis and Applications}
\author{Elena Cordero}
\address{Universit\`a di Torino, Dipartimento di
Matematica, via Carlo Alberto 10, 10123 Torino, Italy}
\email{elena.cordero@unito.it}
\author{Maurice de Gosson}
\address{University of Vienna, Faculty of Mathematics,
Oskar-Morgenstern-Platz 1 A-1090 Wien, Austria}
\email{maurice.de.gosson@univie.ac.at}
\author{Fabio Nicola}
\address{Dipartimento di Scienze Matematiche, Politecnico di Torino, corso
Duca degli Abruzzi 24, 10129 Torino, Italy}
\email{fabio.nicola@polito.it}
\subjclass[2010]{42B35, 42C15, 47G30, 81Q20}
\keywords{Time-frequency analysis, semi-classical analysis, Gabor frames, modulation spaces, Schr\"odinger equation, quadratic potentials}
\date{}

\begin{abstract}
This work represents a first systematic attempt to  create a common ground for   semi-classical and  time-frequency  analysis. These two different areas combined together provide interesting outcomes in terms of Schr\"odinger type equations. Indeed, continuity results of both Schr\"odinger propagators and their asymptotic solutions are obtained on $\hbar$-dependent Banach spaces, the semi-classical version of the well-known modulation spaces. Moreover, their operator norm is controlled by a constant independent of the Planck's constant $\hbar$. The main tool in our investigation is the joint application of standard approximation techniques from semi-classical analysis and a generalized version of Gabor frames, dependent of the parameter $\hbar$. Continuity properties of more general Fourier integral operators (FIOs) and their sparsity are also investigated. 
\end{abstract}

\maketitle

\section{Introduction}
This paper represents a first systematic attempt to set up a joint  framework for   semi-classical and time-frequency analysis. There are many excellent 
contributions on wave packet decompositions in semi-classical analysis (cf., e.g.,  \cite{ComberscureRobert2012, DiSj, hagedorn1980semiclassical, hagedorn1981semiclassical, Zworski} and references therein). The main motivation for this topic is quantum mechanics: the basic theme is to understand the relationships between dynamical systems and the behavior of solutions to Schr\"{o}dinger equations with a small positive parameter $\hbar\in (0,1]$, the Planck's constant, in other words,  how classical mechanics is a limit of quantum mechanics. 
The main tool for this understanding is the use of coherent states. According to Gilmore-Perelomov \cite{Gilmore}, a coherent state system is an orbit for an irreducible group action in an Hilbert space. In particular, the most well-known coherent states are obtained by the Weyl-Heisenberg
group action in $\lrd$ and the standard Gaussian 
\begin{equation}\label{psiGauss}
\psi_0(x)=2^{n/4}e^{-\pi |x|^2}.
\end{equation}
Recall that the $\hbar$-Weyl quantization of a function $H$ on the phase space $\rdd$ is formally defined by
\begin{align}\label{hWeyl}
\widehat{H}f(x)={\rm Op}^w_{\hbar}[H]f(x)&=(2\pi \hbar)^{-n}\intrdd e^{i\hbar^{-1}(x-y)\xi } H\Big(\frac{x+y}2,\xi\Big) f(y)\,dyd\xi\\
&=(2\pi)^{-n}\intrdd e^{i(x-y)\xi } H\Big(\frac{x+y}2,\hbar \xi\Big) f(y)\,dyd\xi
\end{align}
with $f$ in the Schwartz space $\cS(\rd)$. The function $H$ is called the $\hbar$-Weyl symbol of $\widehat{H}$. The Weyl-Heisenberg group action can be expressed by the Weyl quatization as follows.
For $z_0=(x_0,p_0)\in\rdd$, $f\in\cS(\rd)$, the Weyl-Heisenberg operator  $\widehat{\mathcal{T}}^\hbar(z_0)$ is given by
\begin{equation}\label{hT}
\widehat{\mathcal{T}}^\hbar(z_0)f(x)=Op^w_{\hbar}[e^{i\hbar^{-1}(p_0 x-x_0p)}]f(x)= e^{i\hbar^{-1}(p_0x-x_0p_0/2)}f(x-x_0).
\end{equation}
If $f=\psi_0$ in \eqref{psiGauss}, $\widehat{\mathcal{T}}^\hbar(z_0)\psi_0$, $z_0\in\rdd$, are the canonical coherent states.

It is then not surprising that semi-classical analysis
and time-frequency analysis are closely related because coherent states are the building blocks  for the so-called $\hbar$- Gabor frames, an extension of Gabor frames, the bricks of time-frequency analysis. \par

Gabor frames or Weyl-Heisenberg frames (the latter terminology was introduced in \cite{DGM86}), are applied in several different areas: for characterization of smoothness properties, in particular for the definition of modulation spaces (cf. Section $2$), for characterization of pseudodifferential operators and more generally Fourier integral operators (FIOs) (see Section $3$) and of course in signal processing (cf. \cite{30,31} and references therein).  They are widely employed in numerical analysis and used by engineers, but quite unknown among theoretical physicists. Although Gabor frames have been widely employed in the study of Schr\"odinger equations by two of us in the works \cite{fio6,fio5,CNR2015RevMP,fio2,CNR2015}, starting from the pioneering paper \cite{fio1}, there have not been displayed any direct connection with theoretical physics since the work of one of us \cite{deGossonACHA}. 
In that paper the definition of semi-classical Gabor frames,
 named $\hbar$-Gabor frames, first appears.  
  \begin{definition}\label{ee0}
  	Given a  lattice $\Lambda$ in $\rdd$ and a non-zero  function $g\in\lrd$, the system
  	$$\mathcal{G}^\hbar(g,\Lambda)=\{\tauhz g\,:\,z\in\Lambda\}
  	$$
  	is called a semi-classical Gabor frame or $\hbar$-Gabor frame if it is a frame for $\lrd$, that is there exist constants $0<A\leq  B$ such that
  	\begin{equation}\label{framedef}
  	A\|f\|_2^2\leq\sum_{z\in \Lambda}|\la f,\tauhz g\ra|^2\leq B\|f\|_2^2,\quad \forall f \in\lrd.
  	\end{equation}
  \end{definition}
  In particular, when $\hbar=(2\pi)^{-1}$,  we define $$\widehat{\mathcal{T}}(z)=\widehat{\mathcal{T}}^{(2\pi)^{-1}}(z)$$
   and we recapture the standard definition of a  Gabor frame simply by replacing the latter operator in \eqref{framedef} (see Section $2$ for details).\par

In this work we present the main features of semi-classical Gabor frames, continuing their investigation started in \cite{BBCNACHA2016,deGossonACHA} (see also \cite{nic} for an application to functional integration). \par 
The most well-known and used Banach spaces in \tfa \, are the so-called \modsp s and Wiener amalgam spaces \cite{F1}. Here we focus on the former spaces, whose norm can be interpreted as a measure of the joint \tf\, distribution of a signal $f$ in $\cS'(\rd)$, the space of tempered distributions. We refer to Section 2 for their exact definition and we recall that they are a scale of spaces comprehending 
the Sobolev spaces $H^s(\rd)$, the Shubin-Sobolev spaces $\mathcal{Q}_s$ \cite{BCG, Shubin91}, hence, in particular, the Hilbert space $L^2(\rd)$. 
For $\lambda>0$, we define the metaplectic operator $\widehat{D}_{\lambda}$  by
\begin{equation}\label{oplambda}\widehat{D}_{\lambda} f(x)=\lambda^{n/2} f(\lambda x),\,\quad f\in\lrd.
\end{equation}
If $M^{p,q}_m(\rd)$ denotes the standard modulation space defined in Subsection \ref{2.2}, and 
\begin{equation}\label{hdef}
h=2\pi \hbar
\end{equation}
we propose 
$$
M^{p,q,\hbar}_m(\rd)=\widehat{D}_{h^{-1/2}}M^{p,q}_m(\rd)
$$
as the semi-classical version of the  space $M^{p,q}_m(\rd)$.
The main motivation is that these spaces turn out to be the right Banach spaces for continuity results of both exact and  asymptotic solutions to semi-classical Schr\"odinger equations. Indeed we obtain norm estimates for such operators, {\it uniformly with respect to  the constant $\hbar$}. Namely, our focus is the Cauchy problem  in the semi-classical regime ($\hbar\to 0^+)$
\begin{equation}\label{C1}
\begin{cases} i \hbar
\partial_t u =\Ha u\\
u(s)=u_0,
\end{cases}
\end{equation}
where $t\in[0,T]$, $u_0\in\lrd$, $s\in[0,T]$ is the initial time, and the quantum Hamiltonian $\Ha$ is supposed to be the $\hbar$-Weyl quantization of the classical observable $H(t,z)$, with $z=\phas\in\rdd$. Such a symbol is supposed to satisfy the following hypothesis:\par
\medskip
{\bf Assumption (H)}. {\it The observable $H(t,z)$ is continuous with respect to $(t,z)\in [0,T]\times \rdd$ and smooth in $z$, satisfying
	\begin{equation}\label{ipotesi}
	|\partial^\alpha_z H(t,z)|\leq C_\alpha,\quad \forall |\alpha|\geq 2,\ z\in\rdd,\ t\in [0,T].
	\end{equation}
}

We decompose the initial datum $u_0$ in \eqref{C1} by means of a $\hbar$-Gabor frame whose atoms are Gaussian coherent states, construct asymptotic solutions for each of them,  so-called \emph{Gaussian beams}, and finally by superposition obtain the asymptotic solution (parametrix) to \eqref{C1}.\par 
Note that in our construction of the parametrix to \eqref{C1} we partially exploit the well-known tool of semi-classical approximation  by Taylor series \cite{ComberscureRobert2012}, already used in \cite{BBCNACHA2016}:  the approximate solution is searched as a finite sum of powers of $\hbar$, and the order of  approximation can be arbitrarily large. Let us underline that here time-frequency analysis comes in to play by using   $\hbar$-Gabor frames as coherent states.  By combining tools from both semi-classical and time-frequency analysis (the latter developed in Subsection \ref{2.1}), we attain the desired  continuity results (uniform with respect to $\hbar$) for the approximate solution on the semi-classical modulation spaces $
M^{p,q,\hbar}_m$. Moreover, we present precise estimates of the error term in such spaces, again uniform with respect to $\hbar$,  see Theorem \ref{mainteo} below, that can be regarded as the main result of this work.

All these issues let us claim that semi-classical Gabor frames and modulation spaces are, beyond $L^2$, the right Banach spaces to be used in  quantum mechanics. This assertion is confirmed by the study of a broader class of Schr\"odinger-type propagators, the  $\hbar$-dependent Fourier integral operators $\widehat{A}^\hbar:\cS(\rd)\to\cS'(\rd)$ in the class 
$FIO_\hbar(\chi,s)$, defined in terms of 
the decay properties of the entries of their so-called $\hbar$-Gabor matrix
\begin{equation}\label{eee}
|\langle \widehat{A}^\hbar\mathcal{T}^\hbar(z) g^h,\mathcal{T}^\hbar(w) g^h\rangle|\leq \frac{C_s}{\langle h^{-1/2}(w-\chi(z))\rangle^s},\quad z,w\in\rdd
\end{equation}
where $\la\cdot\ra=(1+|\cdot|^2)^{1/2}$, $\chi$ is a \emph{tame} symplectomorphism (introduced in \cite{fio1}, see also  \cite{fio3}) and $C_s>0$ is independent of $\hbar$; see Definition \ref{3.1} below.

In the following Section $3$ we investigate the main properties of such operators. In particular, we furnish again continuity properties on semi-classical modulation spaces and the so-called \emph{sparsity} properties of such operators. Roughly speaking, there are finitely many entries in \eqref{eee} that are not negligible, so that such operators can be efficiently represented  numerically.\par
This paper can be regarded as a first step of a project aiming at allowing semi-classical and time-frequency analysis to talk to each other. We believe that joining  the main features of these disciplines will provide an advancement in the understanding of both areas.

The work is organized as follows:  Section~2 contains the preliminary
notions from \tfa \, and the study of semi-classical Gabor frames and modulation spaces. In Section~3 we study the main properties of
semi-classical Fourier integral operators and provide an application to Schr\"odinger propagators. Section~4 contains the parametrix construction for  Schr\"odinger equations and exhibits the main result of this paper: Theorem \ref{mainteo}.

\section{Preliminaries and  \tfa \,tools}\label{Sect:Prel}
 The metaplectic group is denoted by $Mp(n)$. Consider $\widehat{S}\in Mp(n)$ with covering projection $\pi^\hbar: \widehat{S}\mapsto S \in {\rm Sp}(n,\R)$ the symplectic group of real $2n\times 2n$ matrices.  The appearance of the subscript $\hbar$ is due to the fact that to the $\hbar$-dependent operator $\widehat{V}_Pf(x)= e^{-i Px \cdot x/(2\hbar)}f(x)$ (chirp) corresponds the projection $\pi^\hbar (\widehat{V}_P)=V_P$, with $V_P=\begin{pmatrix} I_n&0_n\\-P&0_n\end{pmatrix}$, $P=P^T$,
   and to the Fourier transform $\widehat {J} f(x)=(2\pi i \hbar)^{-n/2}\intrd e^{-i x x'/\hbar} f(x') \,dx'$  corresponds    $\pi^\hbar(\widehat{J})=J$, defined by
   \begin{equation}
   \label{matriceJ}
   J=\begin{pmatrix} 0_n&I_n\\-I_n&0_n\end{pmatrix}.
   \end{equation}  For details see \cite[Appendix A]{deGossonACHA} and the books \cite{ComberscureRobert2012,deGossonbook}.
  In particular, for $\lambda>0$ we shall use the metaplectic operator $\widehat{D}_{\lambda}\in Mp(n)$  defined in
\eqref{oplambda}
  and whose projection is $\pi^\hbar (\widehat{D}_{\lambda})=D_{\lambda}$, the symplectic  matrix
 \begin{equation}
 \label{matricelambda}
 D_{\lambda}=\begin{pmatrix} \lambda^{-1} I_n& 0_n\\0_n&\lambda I_n\end{pmatrix}.
 \end{equation}
 In the sequel we shall often use the fundamental symplectic covariance formula
 \begin{equation}\label{CF}
 \tauhz \widehat{S}= \widehat{S}\th(S^{-1}z)\quad S\in {\rm Sp}(n,\R).
 \end{equation}

   \subsection{Semi-classical Gabor frames}\label{2.1}
 Consider a lattice $\Lambda$ in $\rdd$. For $g\in\lrd$, the Gabor system $$\mathcal{G}(g,\Lambda) =\{\widehat{\mathcal{T}}(z)g,z\in\Lambda\},$$
 (recall that $\t(z)=\t^{(2\pi)^{-1}}(z)$) is a Gabor frame for $\lrd$ if there exist
 constants $A,B> 0 $ such that for every $f\in\lrd$
 \begin{equation}
 \label{Frame_rel}
 A\|f\|^2_2 \leq \sum_{z\in\Lambda} |\la f,\widehat{\mathcal{T}}(z)g\ra|^2 \leq B\|f\|^2_2.
 \end{equation}
 Observe that,  up to a phase factor, $\widehat{\mathcal{T}}(z)$ is the so-called time-frequency (or phase-space) shift
 $$\widehat{\mathcal{T}}(z)f(t)=e^{-\pi i \xi x} e^{2\pi i \xi t} f(t-x)= e^{-\pi i \xi x}  M_{\xi}T_{x}f(t), \quad  z=(x,\xi),$$
 where translation and modulation operators are defined by
 $$T_{x}f(t) = f(t-x) \quad \mbox{and}\quad M_{\xi}f(t) = e^{2\pi i \xi  t} f(t).$$
 If $\eqref{Frame_rel}$ holds,
 then there exists a  $\gamma\in\lrd$ (so-called dual window),
 such that $\mathcal{G}(\gamma,\Lambda)$
 is a frame for $\lrd$ and every $f\in\lrd$ can be expanded as
 \begin{equation}\label{GabExp}
 f=\sum_{z\in\Lambda}\la f, \widehat{\mathcal{T}}(z)g \ra \widehat{\mathcal{T}}(z)\gamma= \sum_{z\in\Lambda}\la f,\widehat{\mathcal{T}}(z)\gamma\ra \widehat{\mathcal{T}}(z)g,
 \end{equation}
 with unconditional convergence in $\lrd$. 
 For $\Lambda=\a\zd\times\beta\zd$, with $\a,\beta>0$, it was proved by Lyubarski \cite{Lyu}, by Seip and Wallsten \cite{Seip,SW} in dimension $n=1$, and then easily extended using tensor product arguments the frame property for the Gaussian function $\psi_0$ in \eqref{psiGauss}.
 The result reads as follows. For $\Lambda=\a\zd\times\beta\zd$, we use the notation $\cG(g,\a,\beta)$ instead of $\cG(g,\Lambda)$ whenever it is more suitable.
   \begin{theorem}\label{Gaussians} The system $\cG(\psi_0,\a,\beta)$, where $\psi_0$  is the Gaussian function defined in \eqref{psiGauss}, is a frame for $L^2(\rd)$ if and only if $\a\b<1$.
   \end{theorem}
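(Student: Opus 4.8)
The plan is to reduce the $n$-dimensional statement to the one-dimensional case by a tensor-product argument, and to quote the known one-dimensional result of Lyubarski, Seip, and Seip--Wallsten for the case $n=1$. So the first step is: for $n=1$, the system $\cG(\psi_0,\a,\b)$ with $\psi_0(x)=2^{1/4}e^{-\pi x^2}$ is a frame for $L^2(\R)$ \fif\ $\a\b<1$. Since the Bargmann transform identifies $\cG(\psi_0,\a,\b)$ (up to normalization and a weight) with sampling sequences for the Fock space of entire functions, this is precisely the Lyubarski / Seip--Wallsten characterization of density for Gaussian Gabor frames; I would simply cite \cite{Lyu,Seip,SW} here and take it as given.

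Second, I would set up the tensor decomposition. Write $\psi_0(x)=\prod_{j=1}^n \psi_0^{(1)}(x_j)$ with $\psi_0^{(1)}$ the one-dimensional Gaussian, and note that the time-frequency shift on $\rd$ factors as $\t(z)=\bigotimes_{j=1}^n \t^{(1)}(z_j)$ for $z=((x_1,\xi_1),\dots,(x_n,\xi_n))$, with $\Lambda=\a\zd\times\b\zd\cong\prod_{j=1}^n(\a\ZZ\times\b\ZZ)$. Consequently the Gabor coefficient $\la f,\t(z)\psi_0\ra$ for a simple tensor $f=\bigotimes f_j$ factors as $\prod_j \la f_j,\t^{(1)}(z_j)\psi_0^{(1)}\ra$, and hence the frame operator of $\cG(\psi_0,\a,\b)$ on $\lrd$ is the $n$-fold tensor power $S\otimes\cdots\otimes S$ of the one-dimensional frame operator $S$ on $L^2(\R)$.

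Third, I would conclude from the spectral mapping for tensor products of positive operators: if $S$ on $L^2(\R)$ satisfies $A\,\Id\le S\le B\,\Id$ with $0<A\le B$, then $S^{\otimes n}$ satisfies $A^n\,\Id\le S^{\otimes n}\le B^n\,\Id$ on $\lrd$; conversely, if $\cG(\psi_0^{(1)},\a,\b)$ fails to be a frame (its frame operator has $0$ in the spectrum, or is unbounded), the same defect is inherited by $S^{\otimes n}$, so $\cG(\psi_0,\a,\b)$ is a frame on $\lrd$ \fif\ $\cG(\psi_0^{(1)},\a,\b)$ is a frame on $L^2(\R)$, which by the one-dimensional result holds \fif\ $\a\b<1$. (One small point to check: the normalizing constant in $\psi_0$ is chosen so that $\|\psi_0\|_2=1$ in every dimension, which is compatible with the tensor factorization since $2^{n/4}=\prod_{j=1}^n 2^{1/4}$.)

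The main obstacle is not in the tensorization — that is routine, once one observes that simple tensors are dense and the frame operator is bounded and self-adjoint, so its norm and lower bound are determined on a dense set — but rather lies entirely in the one-dimensional input: the sufficiency of $\a\b<1$ for the Gaussian, which is the deep theorem of Lyubarski and of Seip--Wallsten obtained via complex-analytic sampling estimates in the Bargmann--Fock space. Since the excerpt permits quoting results stated earlier, and this theorem is being quoted from the literature rather than reproved, the write-up should mainly make the reduction to $n=1$ explicit and attribute the hard case honestly to \cite{Lyu,Seip,SW}.
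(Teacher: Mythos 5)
Your proposal is correct and follows essentially the same route as the paper, which states this result without proof, attributing the one-dimensional case to Lyubarski and Seip--Wallsten \cite{Lyu,Seip,SW} and remarking that it is ``easily extended using tensor product arguments'' to dimension $n$. Your write-up merely makes that tensorization explicit (correctly, via the factorization of the frame operator as $S^{\otimes n}$ and the resulting frame bounds $A^n,B^n$), so there is nothing to add.
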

 Observe that, for any $g\in\lrd$, $r>0$,
 $$\widehat{D}_{r^{-1/2}}(T_{\a m} M_{\b n}g)=T_{\a r^{1/2} m} M_{\b r^{-1/2} n} \widehat{D}_{r^{-1/2}}g,
 $$
 hence the Gabor system $\cG(\psi_0,\a,\b)$ is mapped by $\widehat{D}_{r^{-1/2}}$  to the Gabor system $\cG(\widehat{D}_{r^{-1/2}}\psi_0,\a r^{1/2},\b r^{-1/2})$.
 
 Since $\widehat{D}_{r^{-1/2}}$ is a bijective isometry of $\lrd$,  $\cG(\psi_0,\a,\b)$ is a frame if and only if $\cG( \widehat{D}_{r^{-1/2}}\psi_0,\a r^{1/2},\b r^{-1/2})$ is, and the frame bounds coincide (in particular they are independent of $r$). 
 For $r=2\pi$, notice that  $\widehat{D}_{(2\pi)^{-1/2}}\psi_0=\phi_0$, where 
 \begin{equation}\label{Gaussian}
 \phi_0(x)=\pi^{-n/4}e^{-|x|^2/2},
 \end{equation}
 so that 
 $\cG(\phi_0,\a (2\pi)^{1/2},\b (2\pi)^{-1/2})$ is a Gabor frame if and only if $\a \b<1$ and with the same frame bounds as $\cG(\psi_0,\a,\b)$.\par
 We define its rescaled version by
 \begin{equation}\label{Gaussianh}
 \phi^{\hbar}_0(x)=\widehat{D}_{\hbar^{-1/2}}\phi_0(x)=(\pi \hbar)^{-n/4}e^{-|x|^2/(2\hbar)}.
 \end{equation}
Arguing as above we infer that  $\cG( \phi^{\hbar}_0,\a(2\pi \hbar)^{1/2},\b(2\pi\hbar)^{-1/2})$ is a Gabor frame if and only if $\a \b<1$ and the frame bounds coincide with those of $\cG(\psi_0,\a,\b)$. In particular, the frame bounds do not depend on the Plank constant $\hbar$.
 Indeed, defining $h$ as in \eqref{hdef},
 observe that 
 \begin{equation}\label{linkfp}
 \phi^{\hbar}_0=\widehat{D}_{h^{-1/2}}\psi_0
 \end{equation} hence the Gabor system $\cG( \phi^{\hbar}_0,\a h^{1/2},\b h^{-1/2})$ is the image of $\cG(\psi_0,\a,\b)$ under the dilation $\widehat{D}_{h^{-1/2}}$.\par
Let us also recall that  the Gabor frame $\cG(\psi_0,\a,\b)$ admits a dual window $\gamma_0$ (that is not the canonical one) such that $\gamma_0\in\cS(\rd)$, cf.  \cite{grochenig2009gabor}.\par
 From now onward we use the notation \begin{equation}\label{gh}
 g^h=\widehat{D}_{h^{-1/2}}g,
 \end{equation} for any function  $g\in \lrd$, so that the rescaled version of the dual window $\gamma_0$ reads
 \begin{equation}\label{fh}
 \gamma_0^h =\widehat{D}_{h^{-1/2}}\gamma_0.
 \end{equation} 
The main properties of semi-classical Gabor frames, recalled in Definition \ref{ee0} were investigated in \cite[Proposition 2.3]{BBCNACHA2016} (see also \cite{deGossonACHA}). We recall Proposition 2.3 in \cite{BBCNACHA2016}, which shows how
  to  switch from a $\hbar$-Gabor frame to a standard Gabor frame and vice-versa:
  \begin{proposition}\label{P3}
 	The system $\mathcal{G}(\gamma,\Lambda)$ is a dual Gabor frame  of $\mathcal{G}(g,\Lambda)$ if and only if  $\mathcal{G}^\hbar(\gamma^h,h^{1/2}\,\Lambda)$  is a dual   $\hbar$-Gabor frame of the $\hbar$-Gabor frame $\mathcal{G}^\hbar(g^h,h^{1/2}\,\Lambda)$. Moreover, the frame bounds of $\mathcal{G}(g,\Lambda)$ and $\mathcal{G}^\hbar(g^h,h^{1/2}\,\Lambda)$ coincide, and the same holds for their dual frames.
 \end{proposition}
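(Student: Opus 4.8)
The plan is to transport everything through the unitary dilation $U:=\widehat{D}_{h^{-1/2}}$, exploiting that it intertwines the standard and the semi-classical time-frequency shifts up to the lattice rescaling $z\mapsto h^{1/2}z$.

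\emph{Step 1 (intertwining identity).} First I would establish that
\[
\widehat{D}_{h^{-1/2}}\,\widehat{\mathcal{T}}(z)\,\widehat{D}_{h^{-1/2}}^{-1}=\widehat{\mathcal{T}}^\hbar\big(h^{1/2}z\big),\qquad z\in\rdd ,
\]
equivalently, recalling the notation \eqref{gh}, $\widehat{D}_{h^{-1/2}}\big(\widehat{\mathcal{T}}(z)g\big)=\widehat{\mathcal{T}}^\hbar(h^{1/2}z)\,g^h$ for every $g\in\lrd$ and every $z$. This is a direct computation from the explicit formulas \eqref{hT} and \eqref{oplambda}, using $\widehat{D}_{h^{-1/2}}^{-1}=\widehat{D}_{h^{1/2}}$, the convention $\widehat{\mathcal{T}}(z)=\widehat{\mathcal{T}}^{(2\pi)^{-1}}(z)$ and the relation $h=2\pi\hbar$ from \eqref{hdef}: after the conjugation by $\widehat{D}_{h^{-1/2}}$ the translation $x-x_0$ turns into $x-h^{1/2}x_0$ and the chirp phase $e^{2\pi i(p_0x-x_0p_0/2)}$ turns into exactly $e^{i\hbar^{-1}(h^{1/2}p_0\,x-h^{1/2}x_0\cdot h^{1/2}p_0/2)}$, which is precisely the phase of $\widehat{\mathcal{T}}^\hbar(h^{1/2}z)$, with $z=(x_0,p_0)$. (Alternatively one may combine the symplectic covariance formula \eqref{CF} applied to $\widehat{D}_{h^{1/2}}$ with the elementary pointwise identity relating $\widehat{\mathcal{T}}^\hbar$ to $\widehat{\mathcal{T}}$.)

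\emph{Step 2 (frame property and frame bounds).} Since $z\mapsto h^{1/2}z$ is a bijection of $\Lambda$ onto $h^{1/2}\Lambda$, Step 1 shows that $U$ maps $\mathcal{G}(g,\Lambda)$ onto $\mathcal{G}^\hbar(g^h,h^{1/2}\Lambda)$ element by element. As $U$ is a bijective isometry of $\lrd$ we have $\|Uf\|_2=\|f\|_2$ and
\begin{align*}
\sum_{z\in\Lambda}\big|\langle f,\widehat{\mathcal{T}}(z)g\rangle\big|^2
&=\sum_{z\in\Lambda}\big|\langle Uf,\widehat{\mathcal{T}}^\hbar(h^{1/2}z)g^h\rangle\big|^2\\
&=\sum_{w\in h^{1/2}\Lambda}\big|\langle Uf,\widehat{\mathcal{T}}^\hbar(w)g^h\rangle\big|^2
\end{align*}
for all $f\in\lrd$. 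Because $f\mapsto Uf$ is onto $\lrd$, inequality \eqref{Frame_rel} holds for $\mathcal{G}(g,\Lambda)$ with constants $A,B$ if and only if \eqref{framedef} holds for $\mathcal{G}^\hbar(g^h,h^{1/2}\Lambda)$ with the same $A,B$; this yields both the equivalence of the frame property and the coincidence of the frame bounds. Applied to the pair $(\gamma,\Lambda)$ it also gives the last assertion about the dual frames.

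\emph{Step 3 (duality) and main obstacle.} Suppose $\mathcal{G}(\gamma,\Lambda)$ is a dual Gabor frame of $\mathcal{G}(g,\Lambda)$; by Step 2 both $\mathcal{G}^\hbar(g^h,h^{1/2}\Lambda)$ and $\mathcal{G}^\hbar(\gamma^h,h^{1/2}\Lambda)$ are frames, so only the reconstruction formula remains. Applying $U$ to \eqref{GabExp} and using Step 1 twice,
\begin{align*}
Uf&=\sum_{z\in\Lambda}\langle f,\widehat{\mathcal{T}}(z)g\rangle\,U\widehat{\mathcal{T}}(z)\gamma\\
&=\sum_{w\in h^{1/2}\Lambda}\langle Uf,\widehat{\mathcal{T}}^\hbar(w)g^h\rangle\,\widehat{\mathcal{T}}^\hbar(w)\gamma^h ,
\end{align*}
with unconditional convergence (preserved since $U$ is a bounded invertible isometry); since $Uf$ exhausts $\lrd$, $\mathcal{G}^\hbar(\gamma^h,h^{1/2}\Lambda)$ is a dual $\hbar$-Gabor frame of $\mathcal{G}^\hbar(g^h,h^{1/2}\Lambda)$, and the converse is identical upon applying $U^{-1}=\widehat{D}_{h^{1/2}}$ together with the inverse of the identity in Step 1. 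There is no genuine analytic difficulty here: Steps 2 and 3 are a routine transfer of the frame inequality and of the reconstruction formula through the unitary $U$. The only point that requires care is the bookkeeping in Step 1 — checking that the chirp phase factors of $\widehat{\mathcal{T}}^\hbar$ at the rescaled point $h^{1/2}z$ reproduce exactly those of $\widehat{\mathcal{T}}(z)$ after conjugation, which hinges on the normalizations $h=2\pi\hbar$ and $\widehat{\mathcal{T}}=\widehat{\mathcal{T}}^{(2\pi)^{-1}}$.
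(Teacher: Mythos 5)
Your proof is correct: the intertwining identity $\widehat{D}_{h^{-1/2}}\widehat{\mathcal{T}}(z)\widehat{D}_{h^{1/2}}=\widehat{\mathcal{T}}^\hbar(h^{1/2}z)$ checks out with the normalizations $h=2\pi\hbar$ and $\widehat{\mathcal{T}}=\widehat{\mathcal{T}}^{(2\pi)^{-1}}$, and transporting the frame inequality and the reconstruction formula through the unitary $\widehat{D}_{h^{-1/2}}$ gives both the equivalence and the coincidence of the bounds. The paper itself does not reprove this proposition (it is quoted from \cite{BBCNACHA2016}), but the unitary-dilation transfer you use is exactly the argument the paper carries out in the surrounding paragraphs for the Gaussian window, so this is essentially the same approach.
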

Using Proposition \ref{P3} for the Gaussian $\psi_0$ and its dual window $\gamma_0\in\cS(\rd)$, we can state:
\begin{corollary}\label{C3}
For $\Lambda=\a\zd\times\b\zd$, $\a\b<1$, the system  $\mathcal{G}^\hbar(\gamma_0^h,h^{1/2}\,\Lambda)$   is a dual  frame of the $\hbar$-Gabor frame $\mathcal{G}^\hbar(\phi_0^\hbar,h^{1/2}\,\Lambda)$. Moreover, the frame bounds of $\mathcal{G}^\hbar(\gamma_0^h,h^{1/2}\,\Lambda)$ are the same as those of $\mathcal{G}(\gamma_0,\Lambda)$. In particular, they are independent of $\hbar$.
\end{corollary}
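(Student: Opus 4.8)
The plan is to obtain the statement as a direct specialization of Proposition~\ref{P3}, once the objects are correctly matched up. First I would record the bookkeeping: by \eqref{linkfp} together with the notation \eqref{gh} one has $\phi_0^\hbar=\widehat{D}_{h^{-1/2}}\psi_0=\psi_0^h$, while \eqref{fh} gives $\gamma_0^h=\widehat{D}_{h^{-1/2}}\gamma_0$. Hence, with the choice $g=\psi_0$, $\gamma=\gamma_0$ and the lattice $\Lambda=\a\zd\times\b\zd$, the systems $\mathcal{G}^\hbar(\phi_0^\hbar,h^{1/2}\,\Lambda)$ and $\mathcal{G}^\hbar(\gamma_0^h,h^{1/2}\,\Lambda)$ are precisely the systems $\mathcal{G}^\hbar(g^h,h^{1/2}\,\Lambda)$ and $\mathcal{G}^\hbar(\gamma^h,h^{1/2}\,\Lambda)$ occurring in Proposition~\ref{P3}.

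Next I would verify the hypothesis of Proposition~\ref{P3}, i.e.\ that $\mathcal{G}(\gamma_0,\Lambda)$ is a dual Gabor frame of $\mathcal{G}(\psi_0,\Lambda)$. Since $\a\b<1$, Theorem~\ref{Gaussians} guarantees that $\mathcal{G}(\psi_0,\a,\b)=\mathcal{G}(\psi_0,\Lambda)$ is a Gabor frame for $\lrd$, and the result recalled from \cite{grochenig2009gabor} furnishes a Schwartz dual window $\gamma_0\in\cS(\rd)$; by the very definition of dual window this says that $\mathcal{G}(\gamma_0,\Lambda)$ is a dual Gabor frame of $\mathcal{G}(\psi_0,\Lambda)$ (one works throughout with the phase-space shifts $\widehat{\mathcal{T}}(z)$ of Section~2; passing to the operators $T_{\a m}M_{\b n}$ only multiplies each atom by a unimodular phase, which affects neither the frame property nor the frame bounds).

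Finally I would invoke Proposition~\ref{P3} with $g=\psi_0$, $\gamma=\gamma_0$: it yields at once that $\mathcal{G}^\hbar(\gamma_0^h,h^{1/2}\,\Lambda)$ is a dual $\hbar$-Gabor frame of $\mathcal{G}^\hbar(\psi_0^h,h^{1/2}\,\Lambda)=\mathcal{G}^\hbar(\phi_0^\hbar,h^{1/2}\,\Lambda)$, which is the first assertion. For the second, the ``moreover'' part of Proposition~\ref{P3}, read for the dual frames, gives that the frame bounds of $\mathcal{G}^\hbar(\gamma_0^h,h^{1/2}\,\Lambda)$ equal those of $\mathcal{G}(\gamma_0,\Lambda)$; the latter are fixed real numbers depending only on $\psi_0,\gamma_0,\a,\b$, hence independent of $\hbar$. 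There is no genuine obstacle here: the corollary is essentially a one-line consequence of Proposition~\ref{P3}, the only point deserving a word of care being the identification of the ``dual window'' terminology of \cite{grochenig2009gabor} with the ``dual Gabor frame'' terminology of Proposition~\ref{P3}, together with the harmless passage between the two normalizations of the time-frequency shifts.
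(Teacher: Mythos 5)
Your proposal is correct and follows exactly the route the paper takes: the paper derives Corollary~\ref{C3} as an immediate specialization of Proposition~\ref{P3} to $g=\psi_0$ and its Schwartz dual window $\gamma=\gamma_0$, using the identification $\phi_0^\hbar=\widehat{D}_{h^{-1/2}}\psi_0=\psi_0^h$ from \eqref{linkfp}. Your additional bookkeeping (checking the hypothesis via Theorem~\ref{Gaussians} and reconciling the two normalizations of the time-frequency shifts) only makes explicit what the paper leaves implicit.
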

\subsection{Semi-classical Modulation Spaces}\label{2.2}
The Banach spaces under our consideration will be a semi-classical version of modulation spaces.\par
Modulation  spaces were introduced by Feichtinger in \cite{F1} and have been widely employed over the last 20 years in the framework of time-frequency analysis.  For their definition, we need to recall the notion of weight functions on the \tf\ plane, which intervene in the  description of the  decay  properties of a function/distribution. We denote by  $v$  a
continuous, positive,  even, submultiplicative  weight function (in short, a
submultiplicative weight), i.e., $v(0)=1$, $v(z) = v(-z)$, and
$ v(z_1+z_2)\leq v(z_1)v(z_2)$, for all $z, z_1,z_2\in\Renn.$
A positive, even weight function $m$ on $\Renn$ is called  {\it
	v-moderate} if
$ m(z_1+z_2)\leq Cv(z_1)m(z_2)$  for all $z_1,z_2\in\Renn.$ We denote by $\mathcal{M}_v$ the class of all $v$-moderate weights. In this paper we will mainly work with the polynomial weights
\begin{equation}\label{vs}
v_s(z)=\la z\ra^s=(1+|z|^2)^{\frac s2},\quad z\in\rdd,\quad s\in\bR.
\end{equation} 
Let $f\in\mathcal{S}^{\prime }(\mathbb{R}^n)$. For $z=(x,\xi)\in\rdd$, we define the short-time
Fourier transform (STFT) of $f$ as 
\begin{equation}  \label{STFTdef}
V_gf(z)=\langle f,M_\xi T_xg\rangle=\int_{\mathbb{R}^n} f(t)\, {\overline {g(t-x)}} \, e^{-2\pi it \xi }\,dt.
\end{equation}
Observe that
\begin{equation}\label{STFTT}
|V_gf(z)|=|\langle f,\widehat{\Tau}(z)g\rangle|,\quad \forall z\in\rdd.
\end{equation}
Given a non-zero window $g\in\sch(\Ren)$, a $v$-moderate weight
function $m$ on $\Renn$, $1\leq p,q\leq
\infty$, the {\it
	modulation space} $M^{p,q}_m(\Ren)$ consists of all tempered
distributions $f\in\sch'(\Ren)$ such that the STFT $V_gf$ is in $L^{p,q}_m(\Renn )$
(weighted mixed-norm spaces), with norm 
$$
\|f\|_{M^{p,q}_m}=\|V_gf\|_{L^{p,q}_m}=\left(\int_{\Ren}
\left(\int_{\Ren}|V_gf(x,\xi)|^pm(x,\xi)^p\,
dx\right)^{q/p}d\xi\right)^{1/q}.  \,
$$
(Obvious modifications occur  when $p=\infty$ or $q=\infty$). If $p=q$, we write $M^p_m(\Ren)$ instead of $M^{p,p}_m(\Ren)$, and if $m(z)\equiv 1$ on $\Renn$, then we write $M^{p,q}(\Ren)$ and $M^p(\Ren)$ for $M^{p,q}_m(\Ren)$ and $M^{p,p}_m(\Ren)$.
Then  $\Mmpq (\Ren )$ is a Banach space whose definition is independent of the choice of the window $g$, in the sense that different  non-zero window functions yield equivalent  norms.

In the following we shall work with the a rescaled version of the Schwartz seminorms and modulation space norms, to make the corresponding spaces suitable for the analysis in the semi-classical regime, where we are looking for estimates independent of $\hbar$.\par
Namely, let us define, for $0<\hbar\leq1$,
\begin{equation}\label{esseh}
\cS^\hbar(\rd)=\widehat{D}_{h^{-1/2}}\cS(\rd)=\{f\in\cC^\infty(\rd)\,:\|\widehat{D}_{h^{1/2}}f\|_k<\infty\},
\end{equation}
where $\{\|\cdot\|_{k}\}_k$ $k\in\bN$, is the family of seminorms defining the Schwartz class $\cS(\rd)$:
$$\|f\|_{k}=\sum_{|\a|\leq k}\|(1+|\cdot|^k)\partial^\a f\|_\infty.$$
Clearly  $\cS^\hbar(\rd)=\cS(\rd)$ as sets.
\begin{definition}\label{mpqh}
For $m\in\mathcal{M}_v$, $1\leq p,q\leq \infty$, $0<\hbar\leq1$, we define
\begin{equation*}
M^{p,q,\hbar}_m(\rd)=\widehat{D}_{h^{-1/2}}M^{p,q}_m(\rd)=\{ f\in\cS'(\rd): \|f\|_{M^{p,q,\hbar}_m}:= \|\widehat{D}_{h^{1/2}}f\|_{M^{p,q}_m}<\infty \}.
\end{equation*}
\end{definition}
Using $(M^{p,q}_m)^*(\rd)=M^{p',q'}_{1/m}(\rd)$, for $1\leq p,q<\infty$,  we infer 
\begin{equation}\label{duality}
(M^{p,q,\hbar}_m)^*(\rd)=M^{p',q',1/\hbar}_{1/m}(\rd),
\end{equation}
for $1\leq p,q<\infty$.\par
 
Fix $g\in\cS(\rd)\setminus\{0\}$. For $f\in M^{p,q,\hbar}_m(\rd)$,  let us compute explicitly the  STFT of $\widehat{D}_{h^{1/2}}f$. Writing $z=(x,\xi)$, using $\widehat{\mathcal{T}}(x,\xi)=\widehat{\mathcal{T}}^\hbar(x,2\pi\hbar \xi)$ and the covariance property $\widehat{D}_{h^{-1/2}}\widehat{\mathcal{T}}^\hbar(z)=\widehat{\mathcal{T}}^\hbar ({D}_{h^{-1/2}}z)\widehat{D}_{h^{-1/2}}$, we obtain
\begin{align*}
|V_g (\widehat{D}_{h^{1/2}}f)(x,\xi)|&=|\la \widehat{D}_{h^{1/2}} f,\widehat{\mathcal{T}}(z)g\ra| =|\la \widehat{D}_{h^{1/2}} f, \widehat{\mathcal{T}}^\hbar(x,2\pi\hbar \xi)g\ra|\\
&=|\la f,\widehat{D}_{h^{-1/2}}\widehat{\mathcal{T}}^\hbar(x,2\pi\hbar \xi)g\ra|
=|\la f,\widehat{\mathcal{T}}^\hbar(h^{1/2}x,h^{1/2}\xi)\widehat{D}_{h^{-1/2}} g\ra|\\
&=|\la f,\widehat{\mathcal{T}}^\hbar(h^{1/2}x,h^{1/2}\xi) g^h\ra|
\end{align*}
(in the last row we used the notation \eqref{gh}). 
The previous computations motivate the definition of a semi-classical version of the STFT as follows.
\begin{definition}
Fix $g\in \cS(\rd)\setminus\{0\}$. We define the $\hbar$-short-time Fourier transform ($\hbar$-STFT)  $V_g^\hbar f$ of a function/distribution $f\in M^{p,q,\hbar}_m(\Ren)$ by
\begin{equation}\label{hSTFT}
V_g^\hbar f(z)= \la f,\widehat{\mathcal{T}}^\hbar(h^{1/2}z) g^h\ra, \quad \forall z\in\rdd.
\end{equation} 
\end{definition}
Fixed  $g\in\cS(\rd)\setminus\{0\}$, taking $f\in M^{p,q,\hbar}_m(\rd)$, we reckon
\begin{align*}
\|f\|_{M^{p,q,\hbar}_m}&=\|\widehat{D}_{h^{1/2}}f\|_{M^{p,q}_m}=\left(\intrd\left(\intrd |V_g(\widehat{D}_{h^{1/2}}f)\phas|^p m\phas^p\right)^{\frac q p}d \xi\right)^\frac 1q\\
&=\left(\intrd\left(\intrd |\la f,\widehat{\mathcal{T}}^\hbar(h^{1/2}x,h^{1/2}\xi) g^h\ra|^p m\phas^p\right)^{\frac q p}d \xi\right)^\frac 1q\\
&=\left(\intrd\left(\intrd |V_{g}^\hbar f|^p m\phas^p\right)^{\frac q p}d \xi\right)^\frac 1q.
\end{align*}
Hence, an equivalent definition of the semi-classical modulation spaces is as follows: Fix $g\in\cS(\rd)\setminus\{0\}$, then 

$$ M^{p,q,\hbar}_m(\rd)=\{ f\in\cS'(\rd)\,:\,  \|V_{g}^\hbar f\|_{L^{p,q}_m}<\infty\}.$$

As Gabor frames characterize modulation spaces, we shall show that $\hbar$-Gabor frames characterize semi-classical modulation spaces.\par Given a lattice $\Lambda\subset\rdd$, a Gabor frame $\mathcal{G}(g,\Lambda)$  with dual window $\gamma\in \lrd$, let us recall the coefficient (or analysis) operator $C_g: \lrd\to \ell^2(\Lambda)$, given by
$$C_g f=(\la f,\widehat{\Tau}(z)g\ra)_{z\in\Lambda}
$$
and the reconstruction (or synthesis) operator $D_\gamma: \ell^2(\Lambda)\to \lrd$:
$$R_\gamma c=\sum_{z\in\Lambda} c_{z}\widehat{\Tau}(z) \gamma,\quad c\in \ell^2(\Lambda).
$$
The related Gabor frame operator is defined by
$$S_{\gamma, g} f=\sum_{z\in\Lambda} \la f,\widehat{\Tau}(z)g\ra \widehat{\Tau}(z)\gamma.$$
The fact that $\gamma$ is a dual window of $g$ can be equivalently written as $S_{\gamma,g}=I$ on $\lrd$, that is the Gabor frame operator is the identity operator on $\lrd$.
The semi-classical analysis $C_g^\hbar$, synthesis $R_\gamma^\hbar$ and frame operators $S_{\gamma, g}^\hbar$ (also called $\hbar$-analysis, $\hbar$-synthesis and $\hbar$-frame operators) are
obtained simply by substituting $\widehat{\mathcal{T}}(z)$ with the Weyl-Heisenberg shift $\widehat{\mathcal{T}}^\hbar(z)$:
$$C_g^\hbar f=(\la f,\widehat{\mathcal{T}}^\hbar(z)g\ra)_{z\in\Lambda},
$$
$$R_\gamma^\hbar c=\sum_{z\in\Lambda} c_{z}\widehat{\Tau^\hbar}(z) \gamma,\quad c\in \ell^2(\Lambda).
$$
$$S_{\gamma, g}^\hbar f=\sum_{z\in\Lambda} \la f,\widehat{\Tau}^\hbar(z)g\ra \widehat{\Tau}^\hbar(z)\gamma.$$

  Provided the window $g,\gamma$ are smooth enough, we have the following characterization for modulation spaces (see, e.g. \cite[Corollary 12.2.6]{book}). 
\begin{proposition}\label{repf}
	Assume $g,\gamma\in M^1_v(\rd)$ and that $S_{\gamma,g}=I$ on $\lrd$. Then
	\begin{align*}
	f&=\sum_{k,n\in\zd}\la f,T_{\a k} M_{\b n}g\ra T_{\a k} M_{\b n}\gamma\\
	&=\sum_{k,n\in\zd}\la f,T_{\a k} M_{\b n}\gamma\ra T_{\a k} M_{\b n}g
	\end{align*}
with unconditional convergence in $M^{p,q}_m(\rd)$ if $1\leq p,q<\infty$ and weak$^*$ convergence in $M^\infty_{1/v}(\rd)$ otherwise. Furthermore, there are constants $0<A\leq B$  such that 
$$A\|f\|_{M^{p,q}_m(\rd)}\leq \| \la f, T_{\a k} M_{\b n}g\ra\|_{\ell^{p,q}_{{m}}(\Lambda)}\leq B \|f\|_{M^{p,q}_m(\rd)},\,\,\forall f \in M^{p,q}_m(\rd).
$$
(Similar estimates hold by replacing $g$ with $\gamma$).
\end{proposition}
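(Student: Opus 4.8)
The statement is the classical Gabor-frame characterization of modulation spaces, so the plan is to follow the standard two-operator scheme: show that the analysis operator $C_g$ maps $\Mmpq(\rd)$ boundedly into $\ell^{p,q}_m(\Lambda)$, show that the synthesis operator $R_\gamma$ maps $\ell^{p,q}_m(\Lambda)$ boundedly into $\Mmpq(\rd)$, and then combine them through the identity $R_\gamma C_g=S_{\gamma,g}=I$. Throughout I would replace $\widehat{\Tau}(\a k,\b n)$ by $M_{\b n}T_{\a k}$, which differ only by the unimodular phase $e^{-\pi i\b n\cdot\a k}$ recorded in the excerpt and hence do not affect any norm.

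First I would establish boundedness of $C_g$. The key point is that for $g\in M^1_v(\rd)$ the STFT $V_gf$ of $f\in\Mmpq$ does not merely lie in $L^{p,q}_m$ but in the Wiener amalgam space $W(C_0,L^{p,q}_m)(\rdd)$; this is the standard ``convolution/continuity relation'' for the STFT, and it is exactly here that $v$-moderateness of $m$ is used, to absorb the translation in $V_gf(w+z)$ against $v(z)$ and thus against the window bound $\|V_\phi g\|_{L^1_v}<\infty$. Sampling such an amalgam function on the lattice $\Lambda$ then yields a sequence in $\ell^{p,q}_m(\Lambda)$ with $\|C_gf\|_{\ell^{p,q}_m}\lesssim\|f\|_{\Mmpq}$.

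Next I would treat the synthesis operator $R_\gamma c=\sum_z c_z\,\widehat{\Tau}(z)\gamma$. For $1\le p,q<\infty$ one computes $V_\phi(R_\gamma c)$ against a fixed nice window $\phi\in\cS(\rd)$ and dominates it by a discrete convolution of $|c|$ with the sampled amalgam envelope of $V_\phi\gamma$; a Young-type inequality on weighted mixed-norm sequence spaces (again using $\gamma\in M^1_v$ and $v$-moderateness) closes the estimate, giving $\|R_\gamma c\|_{\Mmpq}\lesssim\|c\|_{\ell^{p,q}_m}$. For the endpoint cases ($p=\infty$ and/or $q=\infty$) I would instead argue by duality, using $(\Mmpq)^*=M^{p',q'}_{1/m}$ and $(\ell^{p,q}_m)^*=\ell^{p',q'}_{1/m}$ for $1\le p,q<\infty$: $R_\gamma$ is (up to the identification of duals) the adjoint of the analysis operator $C_\gamma$, whose boundedness was already proved, and this simultaneously provides the weak$^*$-continuity needed for the $M^\infty_{1/v}$ case.

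With both operators bounded, I would invoke the hypothesis $S_{\gamma,g}=I$ on $\lrd$, hence on the dense subspace $\cS(\rd)$, and extend $R_\gamma C_g=I$ to all of $\Mmpq$ by density when $1\le p,q<\infty$ and to $M^\infty_{1/v}$ by a weak$^*$-density argument. This yields the first expansion; the second one follows by exchanging $g\leftrightarrow\gamma$, which is legitimate since $S_{g,\gamma}=S_{\gamma,g}^*=I$ as well. The upper norm bound is the boundedness of $C_g$, while the lower bound comes from $\|f\|_{\Mmpq}=\|R_\gamma C_gf\|_{\Mmpq}\le\|R_\gamma\|\,\|C_gf\|_{\ell^{p,q}_m}$. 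Finally, unconditional convergence for $p,q<\infty$ follows because finitely supported sequences are norm-dense in $\ell^{p,q}_m(\Lambda)$ and $R_\gamma$ is bounded, so the partial sums of $C_gf$ converge in $\Mmpq$; for $M^\infty_{1/v}$ only weak$^*$ convergence survives, since finite sequences are not norm-dense in $\ell^\infty_{1/v}$.

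The main obstacle is the amalgam-space analysis underlying the two boundedness statements across the full range of $p,q$ including the endpoints — i.e. making precise the sampling embedding $W(C_0,L^{p,q}_m)(\rdd)\hookrightarrow\ell^{p,q}_m(\Lambda)$, its dual counterpart for synthesis, and the careful bookkeeping of $v$-moderateness of $m$ in the discrete convolution estimates — together with the weak$^*$-convergence claim in $M^\infty_{1/v}$, where the naive density argument is unavailable and one must genuinely use the predual structure.
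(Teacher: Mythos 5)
The paper gives no proof of this proposition at all: it is quoted directly from \cite[Corollary 12.2.6]{book}, and your outline reproduces precisely the standard argument of that reference (amalgam-space membership $V_gf\in W(C_0,L^{p,q}_m)$ plus lattice sampling for $C_g$, a Young-type/duality bound for $R_\gamma$, and then $R_\gamma C_g=S_{\gamma,g}=I$ extended from $\lrd$ by norm- or weak$^*$-density). Your proposal is correct and takes essentially the same approach as the paper's cited source.
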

For a given weight function $m\in \mathcal{M}_v$, let us introduce the notation
\begin{equation}\label{mhbar}
m_\hbar(z):= m(h^{-1/2}z).
\end{equation}
Then we obtain the following characterization for semi-classical modulation spaces. 
\begin{proposition}
	Under the assumptions of Proposition \ref{repf}, we have:\par
	$(i)$  The $\hbar$-analysis operator $C_g^\hbar$ is bounded from  $M^{p,q,\hbar}_m(\rd)$ to $\ell^{p,q}_{m_\hbar}(h^{1/2}\Lambda)$.\par
	$(ii)$ The $\hbar$-synthesis operator  $R_\gamma^\hbar$ is bounded from $\ell^{p,q}_{m_\hbar}(h^{1/2}\Lambda)$ to $M^{p,q,\hbar}_m(\rd)$.\par
	$(iii)$ The $\hbar$-frame operator satisfies
	$S_{\gamma, g}^\hbar=I$ on $M^{p,q,\hbar}_m(\rd)$. Namely, 
	\begin{align}\label{e1}
	f&=\sum_{z\in h^{1/2} \,\Lambda}\la f, \widehat{\mathcal{T}}^\hbar(z)g^h \ra \widehat{\mathcal{T}}^\hbar(z)\gamma^h\\
	&=\sum_{z\in h^{1/2} \,\Lambda}\la f, \widehat{\mathcal{T}}^\hbar(z)\gamma^h\ra \widehat{\mathcal{T}}^\hbar(z)g^h
	\end{align}
	with unconditional convergence in $M^{p,q,\hbar}_m(\rd)$ if $1\leq p,q<\infty$ and weak$^*$ convergence in $M^{\infty,1/\hbar}_{1/v}(\rd)$ otherwise. \par
	Furthermore, there are constants $0<A\leq B$, independent of $\hbar$, such that 
\begin{equation}\label{e2}
	A\|f\|_{M^{p,q,\hbar}_m(\rd)}\leq \| \la f, \widehat{\mathcal{T}}^\hbar(z)g^h\ra\|_{\ell^{p,q}_{{m_\hbar}}(h^{1/2}\Lambda)}\leq B \|f\|_{M^{p,q,\hbar}_m(\rd)},\,\,\forall f \in M^{p,q,\hbar}_m(\rd).
\end{equation}
	(Similar estimates hold by replacing $g^h$ with $\gamma^h$).
\end{proposition}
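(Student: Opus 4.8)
The plan is to derive the whole statement from Proposition~\ref{repf} by conjugating with the rescaling operator $\widehat{D}_{h^{1/2}}$, which by Definition~\ref{mpqh} is an isometric isomorphism of $M^{p,q,\hbar}_m(\rd)$ onto $M^{p,q}_m(\rd)$, with inverse $\widehat{D}_{h^{-1/2}}$. (In $(i)$--$(iii)$ the $\hbar$-operators are understood as those of the $\hbar$-Gabor frame $\mathcal{G}^\hbar(g^h,h^{1/2}\Lambda)$, consistently with \eqref{e1}--\eqref{e2}.) First I would record the two elementary facts that make the transfer work. For $\lambda\in\Lambda$, the symplectic covariance \eqref{CF} applied to $\widehat{S}=\widehat{D}_{h^{-1/2}}$, together with the normalization $\widehat{\mathcal{T}}(x,\xi)=\widehat{\mathcal{T}}^\hbar(x,2\pi\hbar\xi)$ and $D_{h^{1/2}}=\mathrm{diag}(h^{-1/2}I_n,h^{1/2}I_n)$ from \eqref{matricelambda}, gives
\[
\widehat{D}_{h^{1/2}}\,\widehat{\mathcal{T}}^\hbar(h^{1/2}\lambda)\,\widehat{D}_{h^{-1/2}}=\widehat{\mathcal{T}}(\lambda);
\]
since $g^h=\widehat{D}_{h^{-1/2}}g$ and $\widehat{D}_{h^{1/2}}$ is unitary on $\lrd$, this yields
\[
\la f,\widehat{\mathcal{T}}^\hbar(h^{1/2}\lambda)g^h\ra=\la \widehat{D}_{h^{1/2}}f,\widehat{\mathcal{T}}(\lambda)g\ra ,\qquad \lambda\in\Lambda,
\]
and likewise with $g$ replaced by $\gamma$. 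Secondly, $\lambda\mapsto h^{1/2}\lambda$ is a bijection of $\Lambda$ onto $h^{1/2}\Lambda$ respecting the product structure $\a\zd\times\b\zd$, and by \eqref{mhbar} one has $m_\hbar(h^{1/2}\lambda)=m(\lambda)$; hence it carries $\|\cdot\|_{\ell^{p,q}_{m_\hbar}(h^{1/2}\Lambda)}$ \emph{isometrically} onto $\|\cdot\|_{\ell^{p,q}_m(\Lambda)}$. (Also $|\la\phi,\widehat{\mathcal{T}}(\a k,\b n)g\ra|=|\la\phi,T_{\a k}M_{\b n}g\ra|$ by \eqref{STFTT}, so Proposition~\ref{repf} may be applied with $\widehat{\mathcal{T}}$ in place of $T_{\a k}M_{\b n}$.)

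With these in hand the proof is bookkeeping. For \eqref{e2} and part $(i)$: by the two facts the sequence $\bigl(\la f,\widehat{\mathcal{T}}^\hbar(z)g^h\ra\bigr)_{z\in h^{1/2}\Lambda}$ has the same $\ell^{p,q}_{m_\hbar}(h^{1/2}\Lambda)$-norm as $\bigl(\la \widehat{D}_{h^{1/2}}f,\widehat{\mathcal{T}}(\lambda)g\ra\bigr)_{\lambda\in\Lambda}$ has $\ell^{p,q}_m(\Lambda)$-norm, and the latter lies between $A\|\widehat{D}_{h^{1/2}}f\|_{M^{p,q}_m}$ and $B\|\widehat{D}_{h^{1/2}}f\|_{M^{p,q}_m}$ by Proposition~\ref{repf}; since $\|\widehat{D}_{h^{1/2}}f\|_{M^{p,q}_m}=\|f\|_{M^{p,q,\hbar}_m}$, this is exactly \eqref{e2}, and its upper bound is the boundedness of $C^\hbar_{g^h}$ in $(i)$. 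For $(ii)$: applying $\widehat{D}_{h^{1/2}}$ termwise to $R^\hbar_{\gamma^h}c=\sum_{z\in h^{1/2}\Lambda}c_z\widehat{\mathcal{T}}^\hbar(z)\gamma^h$ and using the first fact gives $\widehat{D}_{h^{1/2}}R^\hbar_{\gamma^h}c=\sum_{\lambda\in\Lambda}c_{h^{1/2}\lambda}\widehat{\mathcal{T}}(\lambda)\gamma=R_\gamma\tilde c$ with $\tilde c_\lambda=c_{h^{1/2}\lambda}$, so that $R^\hbar_{\gamma^h}=\widehat{D}_{h^{-1/2}}\circ R_\gamma\circ(\text{relabelling})$ is a composition of bounded maps, the first and last isometric; hence $R^\hbar_{\gamma^h}$ is bounded with the norm of $R_\gamma$. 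For $(iii)$: writing $S^\hbar_{\gamma^h,g^h}=R^\hbar_{\gamma^h}C^\hbar_{g^h}$ and combining the two identities yields $\widehat{D}_{h^{1/2}}S^\hbar_{\gamma^h,g^h}f=S_{\gamma,g}\widehat{D}_{h^{1/2}}f=\widehat{D}_{h^{1/2}}f$ since $S_{\gamma,g}=I$ on $\lrd$; thus $S^\hbar_{\gamma^h,g^h}=I$, and expanding this identity gives the reconstruction formula \eqref{e1}. Unconditional convergence of \eqref{e1} in $M^{p,q,\hbar}_m(\rd)$ for $1\le p,q<\infty$ is inherited from that in Proposition~\ref{repf} since $\widehat{D}_{h^{-1/2}}\colon M^{p,q}_m(\rd)\to M^{p,q,\hbar}_m(\rd)$ is a bounded bijection; the weak$^*$ statement when $p$ or $q$ equals $\infty$ follows in the same way, using the duality \eqref{duality} and the fact that $\widehat{D}_{h^{-1/2}}$ is the (weak$^*$-continuous) Banach-space adjoint of the isomorphism $\widehat{D}_{h^{1/2}}\colon M^{1,\hbar}_v(\rd)\to M^1_v(\rd)$. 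Finally, the constants $A,B$ in \eqref{e2} are literally those of Proposition~\ref{repf}, hence independent of $\hbar$.

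There is no serious obstacle here: the only point needing care is to verify that the three rescalings at play — of the lattice ($\Lambda\leftrightarrow h^{1/2}\Lambda$), of the window ($g\leftrightarrow g^h$ via $\widehat{D}_{h^{\pm1/2}}$), and of the weight ($m\leftrightarrow m_\hbar$) — together with the normalization $\widehat{\mathcal{T}}(x,\xi)=\widehat{\mathcal{T}}^\hbar(x,2\pi\hbar\xi)$, fit together so that \emph{every} identification used is an isometry. Once this is checked, the $\hbar$-independence of $A$ and $B$ is automatic, which is the substantive content of the statement; the rest is the standard Gabor-frame machinery recalled in Proposition~\ref{repf}.
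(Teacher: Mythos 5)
Your proposal is correct and follows essentially the same route as the paper's own proof: both rest on the observation that $\widehat{D}_{h^{1/2}}$ is an isometric isomorphism of $M^{p,q,\hbar}_m(\rd)$ onto $M^{p,q}_m(\rd)$ together with the covariance identity relating $\widehat{\mathcal{T}}^\hbar(h^{1/2}\lambda)$ conjugated by the dilation to $\widehat{\mathcal{T}}(\lambda)$, the lattice relabelling $\Lambda\leftrightarrow h^{1/2}\Lambda$, and the weight identity $m_\hbar(h^{1/2}\lambda)=m(\lambda)$, so that everything transfers isometrically from Proposition~\ref{repf}. The paper phrases the transfer of $(i)$ and $(ii)$ via commutative diagrams, whereas you write out the operator identities explicitly; this is only a presentational difference.
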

\begin{proof}
	 Observe that the metaplectic operators $\widehat{D}_{h^{1/2}}$ is an isometric isomorphism from  $M^{p,q,\hbar}_m(\rd)$  to  $M^{p,q}_m(\rd)$ with inverse $\widehat{D}_{h^{-1/2}}$.
	We treat the case $p,q<\infty$, the other case is similar. By Proposition \ref{repf}, for any $f\in M^{p,q,\hbar}_m(\rd)$, we can write, for $z=(x,\xi)$, and using the covariance property $\widehat{D}_{h^{-1/2}}\widehat{\mathcal{T}}^\hbar(z)=\widehat{\mathcal{T}}^\hbar ({D}_{h^{-1/2}}z)\widehat{D}_{h^{-1/2}}$,
	 \begin{align*}
	 \widehat{D}_{h^{1/2}}f&=\sum_{z\in \Lambda}\la  \widehat{D}_{h^{1/2}} f, \widehat{\mathcal{T}}(z)g \ra \widehat{\mathcal{T}}(z)\gamma \\
	 &=\sum_{z\in \Lambda}\la   f, \widehat{D}_{h^{-1/2}}\widehat{\mathcal{T}}(z)g \ra \widehat{\mathcal{T}} (z)\gamma \\
	 &=\sum_{(x,\xi)\in \Lambda}\la   f, \widehat{D}_{h^{-1/2}}\widehat{\mathcal{T}}^\hbar(x,2\pi\hbar \xi)g \ra \widehat{\mathcal{T}}^\hbar(x,2\pi\hbar \xi)\gamma \\
	 &= \sum_{(x,\xi)\in \Lambda}\la   f, \widehat{\mathcal{T}}^\hbar(h^{1/2}x,h^{1/2}\xi)\widehat{D}_{h^{-1/2}} g\ra \widehat{\mathcal{T}}^\hbar(x,h \xi)\gamma \\
	  &= \sum_{(x',\xi')\in  h^{1/2} \,\Lambda}\la   f, \widehat{\mathcal{T}}^\hbar(x',\xi') g^h\ra \widehat{\mathcal{T}}^\hbar( h^{-1/2}x', h^{1/2}\xi')\gamma 
	 \end{align*}
	 with unconditional convergence in $M^{p,q}_m(\rd)$. Now, applying the isomorphism $\widehat{D}_{h^{-1/2}}$ to both sides of the previous equalities and exploiting the covariance property of $\widehat{\mathcal{T}}^\hbar(z)$ again, we obtain \eqref{e1}, that is $S_{\gamma, g}^\hbar=I$ on $M^{p,q,\hbar}_m(\rd)$.\par
	 Since $g\in M^1_v$, it was proved in \cite[Theorem 12.2.3]{book} that the analysis operator $C_g$ is continuous from $M^{p,q}_m(\rd)$ into $\ell^{p,q}_{{m}}(\Lambda)$. Using similar arguments as before we infer that the following diagram is
	 commutative:
	 \[
	 \begin{diagram}
	 \node{M^{p,q,\hbar}_m(\rd)}\arrow{s,l}{C_g^\hbar} \arrow{c,t}{\widehat{D}_{h^{1/2}}}  \node{ M^{p,q}_m(\rd)}\arrow{s,r}{C_g}\\
	  \node{\ell^{p,q}_{m_\hbar}(h^{1/2}\Lambda)}\node{\ell^{p,q}_{{m}}(\Lambda)}\arrow{w,t}{\widehat{D}_{h^{-1/2}}}
	  \end{diagram}
	   \]
	 Notice that the lattice related to $C_g^\hbar$ is rescaled by $h^{1/2}$, whereas the domain of the weight $m$ is always  $\Lambda$, that is why we are introduced the notation $m_\hbar$. Indeed, for $z\in h^{1/2}\Lambda$, setting $z=h^{1/2}w$, we have $w\in\Lambda$ and the weight $m_\hbar(z)=m(h^{-1/2}h^{1/2}w)=m(w)$ does not depend on $h$ (or $\hbar$).
	 We underline that $$\|C_g^\hbar f\|_{\ell^{p,q}_{m_\hbar}(h^{1/2}\Lambda)}\leq C\|f\|_{M^{p,q,\hbar}_m(\rd)},\quad f\in M^{p,q,\hbar}_m(\rd)$$
	 for suitable constant $C>0$ independent of $\hbar$.\par
	 Arguing similarly we prove that also the diagram below is
	 	 commutative:
	  \[
	 	 \begin{diagram}
	 	 \node{\ell^{p,q}_{m_\hbar}(h^{1/2}\Lambda)}\arrow{s,l}{R_\gamma^\hbar} \arrow{c,t}{\widehat{D}_{h^{1/2}}}  \node{\ell^{p,q}_{{m}}(\Lambda)}\arrow{s,r}{R_\gamma}\\
	 	  \node{M^{p,q,\hbar}_m(\rd)}\node{M^{p,q}_m(\rd)}\arrow{w,t}{\widehat{D}_{h^{-1/2}}}
	 	  \end{diagram}
	 	   \]
	 and we have
	 $$\|R_\gamma^\hbar c\|_{M^{p,q,\hbar}_m(\rd)}\leq C\|c\|_{\ell^{p,q}_{m_\hbar}(h^{1/2}\Lambda)},\quad c\in 
	 \ell^{p,q}_{m_\hbar}(h^{1/2}\Lambda),$$
	 for a positive constant $C$ independent of $\hbar$.
	 From the continuity of the semi-classical analysis and synthesis operators we immediately obtain the norm equivalence \eqref{e2}. This concludes the proof.
	 \end{proof}
	 \section{Semi-classical FIOs}
	 In this section we study the sparsity and continuity on $\hbar$-modulation spaces of a class of Fourier integral operators (FIOs) arising as propagators for certain variable coefficients Schr\"odinger equations. Precisely, we introduce  the $\hbar$-version of  the Wiener algebras of FIOs studied in \cite{fio3} (see also \cite{fio1}).\par
	 Consider a {\it tame} symplectomorphism $\chi:\rdd\to\rdd$, that is\par\medskip 
	 \noindent {\it  (i)} $\chi$ is smooth, invertible,  and
	 preserves the symplectic form in $\rdd$, i.e., $dx\wedge d\xi= d
	 y\wedge d\eta$, if $(x,\xi)=\chi(y,\eta)$;
	 \par\medskip
	\noindent {\it  (ii)} $\chi$ satisfies 
	 \begin{equation}\label{agg10}
	| \partial^\alpha_z \chi(z)|\leq C_\alpha\quad |\alpha|\geq 1,\ z\in\rdd.
	 \end{equation}
	 \begin{definition}\label{3.1}
Consider a $\hbar$-Gabor frame $\G^\hbar(g,\Lambda)$  with
$g\in\cS(\rd)$, $s\geq 0$,  and $\chi$ be a {\it tame} symplectomorphism. We denote by $FIO_\hbar(\chi,s)$ the space of $\hbar$-dependent linear continuous operators $\widehat{A}^\hbar:\cS(\rd)\to\cS'(\rd)$ such that 
 \begin{equation}\label{agg11}
|\langle \widehat{A}^\hbar\mathcal{T}^\hbar(z) g^h,\mathcal{T}^\hbar(w) g^h\rangle|\leq \frac{C_s}{v_s(h^{-1/2}(w-\chi(z)))},\quad z,w\in\rdd
 \end{equation}
 for  some constant $C_s>0$ independent of $\hbar$.
 \end{definition}
 Observe that, for $\hbar=1/(2\pi)$ (hence $h=1$), we have $FIO_{1/(2\pi)}(\chi,s)=FIO(\chi,s)$, the class of FIOs studied in \cite{fio3}.
 Using the formula 
 \[
 \widehat{D}_{h^{-1/2}}\mathcal{T}(z) g=\mathcal{T}^\hbar(h^{1/2} z)g^h
 \]
 we see that $\widehat{A}^\hbar$ satisfies \eqref{agg11} if and only if $\widehat{B}^\hbar:=\widehat{D}_{h^{1/2}}\widehat{A}^\hbar\widehat{D}_{h^{-1/2}}$ satisfies
 \begin{equation}\label{agg12}
|\langle \widehat{B}^\hbar \mathcal{T}(z) g,\mathcal{T}(w) g\rangle|\leq \frac{C_s}{v_s(w-\chi_\hbar(z))},\quad z,w\in\rdd
 \end{equation}
 for the same constant $C_s$, with
	 \[
	\chi_\hbar(z)=  \hbar^{-1/2}\chi(\hbar^{1/2}z).
	\]
We note that $\chi_\hbar$ satisfies, for every $\hbar\in (0,1]$, the same estimates \eqref{agg10} as $\chi$, {\it with the same constants $C_\alpha$}.
Following the notation in \cite{fio3}, this means that $\widehat{B}^\hbar\in FIO(\chi_\hbar,s)$. \par
Using the previous equivalence one can easily rephrase the main properties displayed by the class $FIO(\chi,s)$ in \cite{fio3} to its semi-classical generalization $FIO_\hbar(\chi,s)$. We list the main features below.
First, we present an equivalence between continuous decay conditions
and the decay of the discrete $\hbar$-Gabor
matrix for a linear operator $\widehat{A}^\hbar: \cS(\rd)\to\cS'(\rd)$.
\begin{theorem}\label{cara}
	Let $\widehat{A}^\hbar$ be a continuous linear operator $\cS(\rd)\to\cS'(\rd)$ and
	$\chi$ a tame symplectomorphism. Consider a $\hbar$-Gabor frame $\G^\hbar(g,\Lambda)$  with
	$g\in\cS(\rd)$ and $s\geq 0$. Then the following properties are
	equivalent. \par {\rm
		(i)} There exists $C_s>0$ independent of $\hbar$ such that \eqref{agg11} holds true.\\
	\indent{\rm (ii)}
	There exists $C_s>0$ independent of $\hbar$ such that
	\begin{equation}\label{unobis2}
	|\langle \widehat{A}^\hbar \mathcal{T}^\hbar(\lambda) g^h,\mathcal{T}^\hbar(\mu)g^h\rangle|\leq  \frac{C_s}{v_s(h^{-1/2}(\mu-\chi(\lambda)))},\quad \lambda,\mu\in \Lambda.
	\end{equation}
\end{theorem}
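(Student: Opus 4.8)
The plan is to transport the equivalence to the non-semiclassical class $FIO(\cdot,s)$ of \cite{fio1,fio3}, where it is already known, and then to check that the constants produced there do not degenerate as $\hbar\to0^+$. I would conjugate by the metaplectic operator $\widehat{D}_{h^{1/2}}$: this replaces $\widehat{A}^\hbar$ by $\widehat{B}^\hbar=\widehat{D}_{h^{1/2}}\widehat{A}^\hbar\widehat{D}_{h^{-1/2}}$, carries the $\hbar$-Gabor frame underlying the statement onto the $\hbar$-independent Gabor frame $\mathcal{G}(g,\Lambda)$ via the identity $\widehat{D}_{h^{-1/2}}\mathcal{T}(z)g=\mathcal{T}^\hbar(h^{1/2}z)g^h$, and, by the equivalence between \eqref{agg11} and \eqref{agg12} recalled above, turns $(i)$ into the continuous Gabor-matrix bound $|\langle\widehat{B}^\hbar\mathcal{T}(z)g,\mathcal{T}(w)g\rangle|\leq C_s\,v_s(w-\chi_\hbar(z))^{-1}$ for all $z,w\in\rdd$ and $(ii)$ into the same bound restricted to the fixed lattice $\Lambda$ (the rescaling of the lattice by $h^{1/2}$ being absorbed exactly as in Section~\ref{Sect:Prel}). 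Recalling that $\chi_\hbar$ satisfies \eqref{agg10} with the \emph{same} constants $C_\alpha$ as $\chi$ for every $\hbar\in(0,1]$, the theorem reduces to the following $\hbar$-free statement: for a tame symplectomorphism $\psi$ and a Schwartz Gabor frame $\mathcal{G}(g,\Lambda)$, continuous and discrete Gabor-matrix decay of order $s$ are equivalent, with equivalence constants depending only on $g$, on a dual window, on $\Lambda$, on $s$, and on the constant in \eqref{agg10} for $|\alpha|=1$. This is proved in \cite{fio1} (see also \cite{fio3}).

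For completeness, I would recall the structure of that $\hbar$-free equivalence. The implication $(i)\Rightarrow(ii)$ is trivial, by restricting $z,w$ to lattice points. For $(ii)\Rightarrow(i)$ I would fix a dual window $\gamma\in M^1_{v_N}(\rd)$, available for any $N$ (cf. \cite{book}), and expand both $\mathcal{T}(z)g$ and $\mathcal{T}(w)g$ in $\mathcal{G}(g,\Lambda)$; since all windows are Schwartz these expansions converge in $\cS(\rd)$, and the continuity of $\widehat{B}^\hbar:\cS(\rd)\to\cS'(\rd)$ then legitimises
\[
\langle\widehat{B}^\hbar\mathcal{T}(z)g,\mathcal{T}(w)g\rangle=\sum_{\lambda,\mu\in\Lambda}\langle\mathcal{T}(z)g,\mathcal{T}(\lambda)\gamma\rangle\,\overline{\langle\mathcal{T}(w)g,\mathcal{T}(\mu)\gamma\rangle}\,\langle\widehat{B}^\hbar\mathcal{T}(\lambda)g,\mathcal{T}(\mu)g\rangle.
\]
I would bound $|\langle\mathcal{T}(z)g,\mathcal{T}(\lambda)\gamma\rangle|=|V_\gamma g(\lambda-z)|\leq C_N\,v_N(\lambda-z)^{-1}$ (and analogously in $w,\mu$), the last factor in the sum by $(ii)$, and then split $w-\chi_\hbar(z)=(w-\mu)+(\mu-\chi_\hbar(\lambda))+(\chi_\hbar(\lambda)-\chi_\hbar(z))$. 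Submultiplicativity of $v_s$ together with the Lipschitz bound $|\chi_\hbar(\lambda)-\chi_\hbar(z)|\leq C|\lambda-z|$ (from \eqref{agg10} with $|\alpha|=1$) yields $v_s(\mu-\chi_\hbar(\lambda))^{-1}\leq C'\,v_s(w-\mu)\,v_s(\lambda-z)\,v_s(w-\chi_\hbar(z))^{-1}$ with $C'$ depending only on $s$ and that Lipschitz constant; inserting this and choosing $N>s+2n$, the remaining double sum $\sum_{\lambda,\mu\in\Lambda}v_{N-s}(\lambda-z)^{-1}v_{N-s}(\mu-w)^{-1}$ converges to a bound uniform in $z,w\in\rdd$, which is exactly $(i)$.

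The step I expect to require the most attention — effectively the main point of the theorem — is ensuring the uniformity in $\hbar$ of the final constant, and this is precisely what the reduction above buys: after conjugation the window $g$, the dual $\gamma$ and the lattice $\Lambda$ are $\hbar$-independent, and $\chi_\hbar$ is tame with $\hbar$-independent constants, so the Lipschitz constant, the window constants $C_N$ and the value of the lattice sum in the last estimate are all independent of $\hbar$, and the final constant has the announced form. The remaining care is bookkeeping: the term-by-term manipulation of the double series (handled by the $\cS(\rd)$-convergence of the Gabor expansions and the continuity $\widehat{A}^\hbar:\cS(\rd)\to\cS'(\rd)$) and the dilation identities $\widehat{D}_{h^{-1/2}}\mathcal{T}(z)g=\mathcal{T}^\hbar(h^{1/2}z)g^h$, the equivalence between \eqref{agg11} and \eqref{agg12}, and the scaling invariance of the Gaussian Gabor frame, all established in Section~\ref{Sect:Prel}. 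If Schwartz windows are to be avoided, the same argument runs verbatim for $g,\gamma\in M^1_{v_N}(\rd)$ with $N>s+2n$, using Proposition~\ref{repf}.
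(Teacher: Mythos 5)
Your proposal is correct and follows exactly the route the paper intends: conjugation by $\widehat{D}_{h^{1/2}}$ reduces the statement to the known continuous/discrete equivalence for $FIO(\chi_\hbar,s)$ from \cite{fio1,fio3}, and the uniformity in $\hbar$ comes precisely from the fact that $g$, $\gamma$, $\Lambda$ become $\hbar$-independent after rescaling while $\chi_\hbar$ satisfies \eqref{agg10} with the same constants as $\chi$. The classical argument you recall (Gabor expansion of the shifted windows, submultiplicativity of $v_s$ combined with the Lipschitz bound on $\chi_\hbar$, and convergence of the lattice sums for $N>s+2n$) is the standard proof the paper implicitly invokes.
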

We call the infinite matrix $\{\langle \widehat{A}^\hbar \mathcal{T}^\hbar(\lambda) g^h,\mathcal{T}^\hbar(\mu)g^h\rangle\}_{\mu,\lambda\in \Lambda}$, the \emph{$\hbar$-Gabor matrix} of the operator $ \widehat{A}^\hbar$.\par 

Using similar arguments to \cite[Lemma 3.3]{fio3} we obtain: 
\begin{lemma}
	The definition of $FIO_\hbar(\chi ,s)$ is independent of the $\hbar$-Gabor frame $\cG^\hbar (g, \Lambda )$.
\end{lemma}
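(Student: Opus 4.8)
The plan is to reduce the statement to its classical counterpart for standard Gabor frames, while keeping careful track of the $\hbar$-dependence of the constants. Recall from the discussion preceding the lemma that, for any $\hbar$-Gabor frame $\cG^\hbar(g^h,h^{1/2}\Lambda)$, one has $\widehat{A}^\hbar\in FIO_\hbar(\chi,s)$ with respect to this frame if and only if the conjugated operator $\widehat{B}^\hbar:=\widehat{D}_{h^{1/2}}\widehat{A}^\hbar\widehat{D}_{h^{-1/2}}$ satisfies \eqref{agg12} with respect to the standard Gabor frame $\cG(g,\Lambda)$ --- that is, $\widehat{B}^\hbar\in FIO(\chi_\hbar,s)$ with $\chi_\hbar(z)=\hbar^{-1/2}\chi(\hbar^{1/2}z)$ --- and moreover the constant $C_s$ is literally the same in the two estimates. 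Since $\widehat{D}_{h^{\pm1/2}}$ is a metaplectic (hence unitary) operator preserving $\cS(\rd)$ and sending $g\mapsto g^h$, $\Lambda\mapsto h^{1/2}\Lambda$, Proposition \ref{P3} guarantees that $\cG^\hbar(g^h,h^{1/2}\Lambda)$ is a frame precisely when $\cG(g,\Lambda)$ is. Hence it suffices to show that the class $FIO(\chi_\hbar,s)$ is independent of the Gabor frame, with a constant that may be chosen uniformly in $\hbar\in(0,1]$.

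For a single symplectomorphism, the frame-independence of $FIO(\chi_\hbar,s)$ is precisely \cite[Lemma 3.3]{fio3}, so only the uniformity in $\hbar$ requires attention. The structural fact that makes this work, already recorded above, is that $\chi_\hbar$ satisfies the tame bounds \eqref{agg10} with the \emph{same} constants $C_\alpha$ for every $\hbar\in(0,1]$; in particular $\chi_\hbar$ and $\chi_\hbar^{-1}$ are globally Lipschitz with constants controlled by $\sup_{|\alpha|=1}C_\alpha$ alone. Inspecting the proof of \cite[Lemma 3.3]{fio3}: passing from a window $g_1$ to another window $g_2$ (both in $\cS(\rd)\setminus\{0\}$), one inserts a reproducing formula for $g_1$ and writes $\langle\widehat{B}^\hbar\widehat{\mathcal{T}}(z)g_2,\widehat{\mathcal{T}}(w)g_2\rangle$ as a double integral (or, via Theorem \ref{cara}, a double sum) of $\langle\widehat{B}^\hbar\widehat{\mathcal{T}}(u)g_1,\widehat{\mathcal{T}}(v)g_1\rangle$ against cross-ambiguity functions $V_{g_1}g_2$, which lie in $\cS(\rdd)$ with rapid-decay constants depending only on $g_1,g_2$. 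Bounding the kernel by $C_s\,v_s(v-\chi_\hbar(u))^{-1}$ and using the convolution-type inequality
$$\intrdd\intrdd\frac{1}{v_N(u-z)}\cdot\frac{1}{v_N(v-w)}\cdot\frac{1}{v_s(v-\chi_\hbar(u))}\,du\,dv\leq\frac{C}{v_s(w-\chi_\hbar(z))},$$
valid for $N$ large enough because $v_s$ is $v_N$-moderate and $\chi_\hbar$ is bi-Lipschitz, one obtains \eqref{agg12} for the window $g_2$ with a constant $C_s'=C_s'(s,g_1,g_2,\{C_\alpha\})$. None of these quantities depends on $\hbar$, so $C_s'$ is $\hbar$-uniform.

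Assembling the pieces: if $\widehat{A}^\hbar$ satisfies \eqref{agg11} relative to $\cG^\hbar(g_1^h,h^{1/2}\Lambda_1)$, then $\widehat{B}^\hbar\in FIO(\chi_\hbar,s)$ with an $\hbar$-independent constant, hence $\widehat{B}^\hbar$ obeys \eqref{agg12} relative to $\cG(g_2,\Lambda_2)$ with an $\hbar$-independent constant, and undoing the conjugation shows $\widehat{A}^\hbar$ satisfies \eqref{agg11} relative to $\cG^\hbar(g_2^h,h^{1/2}\Lambda_2)$ with an $\hbar$-independent constant; the roles of $g_1$ and $g_2$ are symmetric. Independence of the lattice is then immediate, since the defining estimate \eqref{agg11} ranges over all $z,w\in\rdd$ and does not involve $\Lambda$ (equivalently, one may invoke Theorem \ref{cara}). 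I expect the only genuinely delicate point to be the bookkeeping of the previous paragraph: one must verify that the change-of-frame constant in \cite[Lemma 3.3]{fio3} depends on the symplectomorphism only through the bounds $C_\alpha$, and not through any finer, potentially $\hbar$-sensitive, quantitative information --- which is exactly where the uniform tameness of the family $\{\chi_\hbar\}_{\hbar\in(0,1]}$ is essential.
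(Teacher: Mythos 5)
Your proposal is correct and follows essentially the same route as the paper, which proves this lemma precisely by the conjugation $\widehat{B}^\hbar=\widehat{D}_{h^{1/2}}\widehat{A}^\hbar\widehat{D}_{h^{-1/2}}$ reducing matters to the classical window-independence result \cite[Lemma 3.3]{fio3} applied to $FIO(\chi_\hbar,s)$, with $\hbar$-uniformity coming from the fact that $\chi_\hbar$ satisfies \eqref{agg10} with the same constants $C_\alpha$. Your write-up merely makes explicit the bookkeeping that the paper leaves to the reader.
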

 \subsection{Continuity properties and sparsity} 
The FIOs in $FIO_\hbar(\chi ,s)$ display continuity results on semi-classical modulation spaces, with norms independent of $\hbar$, as explained in what follows.
\begin{theorem} \label{ee1} Consider $s>2n$, $m\in\cM_{v_r}$, with $0\leq r< s-2n$. Let 
	$\widehat{A}^\hbar \in FIO_\hbar(\chi ,s)$. Then $\widehat{A}^\hbar$ extends to a bounded operator from  $M^{p,\hbar}_{m\circ\chi}(\rd)$ to  
	$M^{p,\hbar}_{m}(\rd)$, for $1\leq p\leq \infty$, with operator norm uniformly bounded with respect to $\hbar$.
\end{theorem}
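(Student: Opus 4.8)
The natural plan is to deduce the statement from the already-established continuity of the $\hbar$-independent classes $FIO(\psi,s)$ of \cite{fio3}, using the conjugation introduced just after Definition~\ref{3.1}. Given $\widehat{A}^\hbar\in FIO_\hbar(\chi,s)$, put $\widehat{B}^\hbar:=\widehat{D}_{h^{1/2}}\widehat{A}^\hbar\widehat{D}_{h^{-1/2}}$. By the equivalence \eqref{agg11}$\Leftrightarrow$\eqref{agg12} this operator lies in $FIO(\chi_\hbar,s)$ with the \emph{same} constant $C_s$, where $\chi_\hbar(z)=h^{-1/2}\chi(h^{1/2}z)$ satisfies \eqref{agg10} with the \emph{same} constants $C_\alpha$ for every $\hbar\in(0,1]$. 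Since, by Definition~\ref{mpqh}, $\widehat{D}_{h^{1/2}}$ is an isometric isomorphism of $M^{p,\hbar}_w(\rd)$ onto $M^p_w(\rd)$ for every weight $w$, it suffices to bound $\widehat{B}^\hbar$ from $M^{p}_{m\circ\chi_\hbar}(\rd)$ to $M^{p}_{m}(\rd)$ with an operator norm independent of $\hbar$, and then to identify $M^{p,\hbar}_{m\circ\chi_\hbar}(\rd)$ with $M^{p,\hbar}_{m\circ\chi}(\rd)$ (with $\hbar$-uniform norm equivalence).

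For the first point I would invoke the weighted continuity theorem for $FIO(\psi,s)$ from \cite{fio3}: under the present hypotheses $s>2n$, $m\in\cM_{v_r}$ with $0\le r<s-2n$, any operator in $FIO(\psi,s)$ extends to a bounded map $M^p_{m\circ\psi}\to M^p_m$, $1\le p\le\infty$. Applied to $\widehat{B}^\hbar$ with $\psi=\chi_\hbar$ this gives the desired boundedness; what must be checked is that the resulting operator norm depends on $\psi$ only through the constants $C_\alpha$ in \eqref{agg10} (together with $C_s$, $s$, $r$, $n$, $p$ and the $v_r$-moderateness constant of $m$), and not otherwise on $\psi$ nor on $\hbar$. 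This holds because the proof in \cite{fio3} is a Schur test on the Gabor matrix, with off-diagonal kernel controlled, via the $v_r$-moderateness of $m$, by $\langle\mu-\psi(\lambda)\rangle^{r-s}$; for a tame $\psi$ the map is bi-Lipschitz with constants depending only on the $C_\alpha$'s --- using that $D\psi$ is symplectic, so that $(D\psi)^{-1}=J^{-1}(D\psi)^{T}J$ is bounded as well and $\psi^{-1}$ is again tame with controlled constants --- so the image lattice $\psi(\Lambda)$ is uniformly separated and the Schur sums converge uniformly over the whole family $\{\chi_\hbar:\hbar\in(0,1]\}$. Hence $\sup_{\hbar}\|\widehat{B}^\hbar\|_{M^p_{m\circ\chi_\hbar}\to M^p_m}<\infty$.

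The remaining --- and, I expect, genuinely delicate --- point is the identification of $M^{p,\hbar}_{m\circ\chi}(\rd)$ with $M^{p,\hbar}_{m\circ\chi_\hbar}(\rd)$ with norm equivalence constants independent of $\hbar$. Unwinding Definition~\ref{mpqh} and the $\hbar$-STFT, this reduces to comparing the two weights $z\mapsto m(\chi(z))$ and $z\mapsto m\big(h^{-1/2}\chi(h^{1/2}z)\big)$ as $v_r$-moderate weights, uniformly in $\hbar\in(0,1]$; the bi-Lipschitz bounds on $\chi$ (hence $\langle\chi(z)\rangle\asymp\langle z\rangle$) together with the moderateness of $m$ are the natural tools for this estimate, and this is the step I would expect to require the most care. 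An equivalent, self-contained route that makes the $\hbar$-independence of every constant manifest and bypasses the conjugation is to expand $f$ in the dual $\hbar$-Gabor frame $\mathcal{G}^\hbar(\gamma_0^h,h^{1/2}\Lambda)$ of Corollary~\ref{C3} (whose frame bounds are $\hbar$-independent), compute the $\hbar$-Gabor matrix of $\widehat{A}^\hbar f$ via Theorem~\ref{cara} (using also the Lemma after it, to pass to two windows), and run the same Schur test directly in $\ell^p_{m_\hbar}(h^{1/2}\Lambda)$ against the discrete characterization of $M^{p,\hbar}_m$ from the Proposition following Proposition~\ref{repf}; one encounters the very same weight comparison there.
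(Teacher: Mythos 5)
Your argument is essentially the paper's own proof: conjugate by $\widehat{D}_{h^{1/2}}$ to pass to $\widehat{B}^\hbar\in FIO(\chi_\hbar,s)$, invoke the weighted continuity theorem for the $\hbar$-independent classes of \cite{fio1,fio3} while tracking that the operator norm depends only on $C_s$ and the constants $C_\alpha$ (which are uniform over the family $\{\chi_\hbar\}_{\hbar\in(0,1]}$), and transfer back via the isometry defining $M^{p,\hbar}_m(\rd)$. The weight-comparison step you single out as delicate --- identifying $M^{p,\hbar}_{m\circ\chi}$ with $M^{p,\hbar}_{m\circ\chi_\hbar}$ uniformly in $\hbar$ --- is silently elided in the paper (whose proof writes $m\circ\chi$ where the conjugated operator naturally yields $m\circ\chi_\hbar$), so your caution is warranted; note it is harmless for the polynomial weights $v_r$ actually used in the sequel, where $v_r\circ\chi\asymp v_r\asymp v_r\circ\chi_\hbar$.
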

\begin{proof}
	Since the operator $\widehat{B}^\hbar$ in \eqref{agg12} satisfies $\widehat{B}^\hbar\in FIO(\chi,s )$, we can apply to $\widehat{B}^\hbar$ the continuity results proved in \cite[Theorem 4.1]{fio1}. Precisely, 
	an inspection of the proof of \cite[Theorem 4.1]{fio1} shows that such an operator $\widehat{B}^\hbar$ is bounded from $M^p_{m\circ \chi}(\Ren)$ into $M^p_{m}(\Ren)$ for every $1\leq p\leq\infty$. Moreover its operator norm is uniformly bounded with respect to $\hbar$, because the constant $C_s$ in \eqref{agg12} is independent of $\hbar$, and the estimates \eqref{agg10} for $\chi_\hbar$ hold, as already observed, uniformly with respect to $\hbar$. By the very definition of the space $M^{p,\hbar}_{m}(\Ren)$ we easily attain the result. 
\end{proof}

Due to the bilipschitz property of $\chi$,  $v_r\circ\chi\asymp v_r$,   hence the previous result for polynomial weights $v_r$ rewrites as follows.
 \begin{corollary}
 Consider  $\widehat{A}^\hbar \in FIO_\hbar(\chi ,s)$ and $0\leq r< s-2n$. Then $\widehat{A}^\hbar$ extends to a bounded operator on  $M^{p,\hbar}_{v_r}(\rd )$, for $1\leq p\leq \infty$, with operator norm uniformly bounded with respect to $\hbar$.
 \end{corollary}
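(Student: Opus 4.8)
The plan is to deduce the statement directly from Theorem~\ref{ee1} by specializing the moderate weight $m$ to the polynomial weight $v_r$ and then collapsing the two spaces $M^{p,\hbar}_{v_r\circ\chi}(\rd)$ and $M^{p,\hbar}_{v_r}(\rd)$ into one. First I would check that $v_r$ is an admissible choice in Theorem~\ref{ee1}: for $r\geq 0$ the weight $v_r$ is submultiplicative, hence $v_r\in\cM_{v_r}$, and the hypothesis $0\leq r<s-2n$ is precisely the one assumed here. Applying Theorem~\ref{ee1} with $m=v_r$ then gives that $\widehat{A}^\hbar$ extends to a bounded operator $M^{p,\hbar}_{v_r\circ\chi}(\rd)\to M^{p,\hbar}_{v_r}(\rd)$ for every $1\leq p\leq\infty$, with operator norm controlled by a constant independent of $\hbar$.

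Next I would record the weight equivalence $v_r\circ\chi\asymp v_r$. The estimate \eqref{agg10} for $|\alpha|=1$, together with the invertibility of $\chi$ and the fact that $D\chi(z)$ is symplectic (so $|D\chi(z)^{-1}|=|D\chi(z)^{T}|$ is bounded by the same constant), forces $\chi$ to be bilipschitz: there is $L\geq1$, depending only on the constants in \eqref{agg10}, with $L^{-1}|z-z'|\leq|\chi(z)-\chi(z')|\leq L|z-z'|$ for all $z,z'$. Taking $z'=0$ and using $\la w\ra\asymp 1+|w|$ yields $\la\chi(z)\ra\asymp\la z\ra$ uniformly in $z$, hence $v_r\circ\chi\asymp v_r$ with constants depending only on $r$ and $L$; in particular these constants are independent of $\hbar$, since $\chi$ itself does not depend on $\hbar$.

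It then remains to transfer the weight equivalence to the semi-classical modulation spaces. By Definition~\ref{mpqh} the dilation $\widehat{D}_{h^{1/2}}$ is an isometric isomorphism $M^{p,\hbar}_m(\rd)\to M^p_m(\rd)$ for any $m\in\cM_v$, so whenever $m_1\asymp m_2$ one has
\[
\|f\|_{M^{p,\hbar}_{m_1}}=\|\widehat{D}_{h^{1/2}}f\|_{M^p_{m_1}}\asymp\|\widehat{D}_{h^{1/2}}f\|_{M^p_{m_2}}=\|f\|_{M^{p,\hbar}_{m_2}},
\]
with the very same equivalence constants as for the (classical) identity $M^p_{m_1}(\rd)=M^p_{m_2}(\rd)$, hence again independent of $\hbar$. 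Taking $m_1=v_r\circ\chi$ and $m_2=v_r$ identifies $M^{p,\hbar}_{v_r\circ\chi}(\rd)$ with $M^{p,\hbar}_{v_r}(\rd)$ with uniformly (in $\hbar$) equivalent norms, and composing this identification with the boundedness statement obtained from Theorem~\ref{ee1} gives the claim. The only point requiring a little care — and really the only step that is not pure bookkeeping — is the uniformity in $\hbar$ at each stage; but since every constant entering the argument (the $C_s$ of \eqref{agg11}, the bilipschitz constant of $\chi$, and the weight-equivalence constants) is manifestly $\hbar$-free, no genuine obstacle arises.
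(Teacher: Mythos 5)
Your proposal is correct and follows exactly the route the paper takes: the corollary is obtained from Theorem \ref{ee1} by setting $m=v_r$ and using the bilipschitz property of the tame symplectomorphism $\chi$ to get $v_r\circ\chi\asymp v_r$, whence $M^{p,\hbar}_{v_r\circ\chi}(\rd)=M^{p,\hbar}_{v_r}(\rd)$ with $\hbar$-independent norm equivalence. The paper states this in a single sentence; your write-up merely supplies the (correct) details behind the bilipschitz estimate and the transfer of the weight equivalence to the semi-classical spaces.
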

 For   $p=2$, we have $M^{2}_{v_r}(\Ren)=\mathcal{Q}_r$, the Shubin-Sobolev spaces \cite{BCG, Shubin91}. Hence the previous issues hold also for the semi-classical version $\mathcal{Q}^\hbar_r:=M^{2,\hbar}_{v_r}(\Ren)$ of Shubin-Sobolev spaces.\par 
Given a tame symplectomorphism $\chi$, let us define
 $$FIO_\hbar(\chi)=\cap_{s\geq 0} FIO_\hbar(\chi,s).
 $$
If $\widehat{A}^\hbar\in  FIO_\hbar(\chi)$, then it satisfies \eqref{unobis2}
 for every $s\geq 0$. This means the operator $\widehat{B}^\hbar$ in \eqref{agg12} has a Gabor matrix highly concentrated along the graph of the map $\chi_\hbar$, or equivalently, the $\hbar$-Gabor matrix of $\widehat{A}^\hbar$ is concentrated along the graph of the map $\chi$. This matrix property is called \emph{sparsity} (cf. \cite{fio2}) and using Theorem \ref{ee1} we can state
\begin{proposition}\label{pro13}
Any $\hbar$-dependent operator $\widehat{A}^\hbar\in FIO_\hbar(\chi)$ is bounded on $M^{p,\hbar}_{v_r}(\Ren)$ for every $1\leq p\leq\infty$, $r\in\R$, with operator norm uniformly bounded with respect to $\hbar$.
\end{proposition}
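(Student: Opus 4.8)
The plan is to obtain Proposition \ref{pro13} as a bookkeeping consequence of Theorem \ref{ee1}, exploiting the fact that membership in $FIO_\hbar(\chi)=\cap_{s\ge 0}FIO_\hbar(\chi,s)$ lets us take the regularity parameter $s$ as large as we please, with the constant $C_s$ in \eqref{agg11} still independent of $\hbar$. Fix $1\le p\le\infty$ and $r\in\R$. The first point to record is that for \emph{every} $r\in\R$ the polynomial weight $v_r$ is $v_{|r|}$-moderate: by Peetre's inequality $\langle z_1+z_2\rangle^{r}\le 2^{|r|/2}\langle z_1\rangle^{|r|}\langle z_2\rangle^{r}$, so $v_r\in\cM_{v_{|r|}}$. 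This is what allows us to cover negative exponents $r$, which lie outside the range $[0,s-2n)$ treated directly in Theorem \ref{ee1}.

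Next I would choose $s>2n+|r|$. Since $\widehat{A}^\hbar\in FIO_\hbar(\chi)$, in particular $\widehat{A}^\hbar\in FIO_\hbar(\chi,s)$ with $\hbar$-independent constant $C_s$, so Theorem \ref{ee1} applies with the moderate weight $m=v_r$ and with the role of ``$r$'' in that statement played by $|r|$ (indeed $0\le|r|<s-2n$). It yields that $\widehat{A}^\hbar$ extends to a bounded operator $M^{p,\hbar}_{v_r\circ\chi}(\rd)\to M^{p,\hbar}_{v_r}(\rd)$, whose operator norm is controlled by a constant depending only on $n,p,s,C_s$ and on the constants $C_\alpha$ in \eqref{agg10} — hence uniformly with respect to $\hbar$, as already exploited in the proof of Theorem \ref{ee1} via the fact that $\chi_\hbar$ obeys \eqref{agg10} with the same $C_\alpha$.

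It remains to remove the composition with $\chi$ on the source space, i.e. to identify $M^{p,\hbar}_{v_r\circ\chi}$ with $M^{p,\hbar}_{v_r}$. Since $\chi$ is a tame symplectomorphism, \eqref{agg10} (with $|\alpha|=1$, applied to $\chi$ and, using invertibility, to $\chi^{-1}$) makes $\chi$ bilipschitz, so $v_r\circ\chi\asymp v_r$ on $\rdd$ with constants depending only on the bilipschitz constants of $\chi$ and on $r$, but \emph{not} on $\hbar$ — crucially because $\chi$ itself is $\hbar$-independent. Consequently $\|g\|_{M^{p}_{v_r\circ\chi}}\asymp\|g\|_{M^{p}_{v_r}}$ uniformly in $g$; taking $g=\widehat{D}_{h^{1/2}}f$ and invoking $\|f\|_{M^{p,\hbar}_m}=\|\widehat{D}_{h^{1/2}}f\|_{M^{p}_m}$ from Definition \ref{mpqh}, we get $\|f\|_{M^{p,\hbar}_{v_r\circ\chi}}\asymp\|f\|_{M^{p,\hbar}_{v_r}}$ with constants independent of $\hbar$. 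Combining this norm equivalence with the boundedness obtained in the previous paragraph shows that $\widehat{A}^\hbar$ is bounded on $M^{p,\hbar}_{v_r}(\rd)$ with operator norm uniformly bounded in $\hbar$. There is no genuine obstacle here beyond the two routine verifications just flagged — the $v_{|r|}$-moderateness of $v_r$ together with the arbitrariness of $s$, and the $\hbar$-uniformity of the weight equivalence $v_r\circ\chi\asymp v_r$.
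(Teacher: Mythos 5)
Your argument is correct and is essentially the paper's own route: the paper states Proposition \ref{pro13} without a separate proof, as an immediate consequence of Theorem \ref{ee1} (via the preceding Corollary) together with the $\hbar$-independent equivalence $v_r\circ\chi\asymp v_r$ and the arbitrariness of $s$ in $FIO_\hbar(\chi)=\cap_{s\ge 0}FIO_\hbar(\chi,s)$. Your only addition is the explicit check that $v_r\in\cM_{v_{|r|}}$ so that negative $r$ is covered by taking $s>2n+|r|$ — a detail the paper leaves implicit, and a worthwhile one to record.
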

\subsection{Applications to Schr\"odinger propagators} We can now consider an application to the sparsity and continuity in modulation spaces of Schr\"odinger propagators. \par
We are interested in the Cauchy problem \eqref{C1}
with symbol $H(t,z)$, $z=\phas\in\rdd$, satisfying the {\bf Assumption (H)}. 
Now, denote by $\chi^{(t,s)}$, $t,s\in [0,T]$, the Hamiltonian flow with Hamiltonian function $H$, i.e. $(x_{t,s},\xi_{t,s})=\chi^{(t,s)}(x,\xi)$ satisfies

\[\dot{x}_{t,s} = \partial_\xi H(t,x_{t,s}, \xi_{t,s}),\quad \dot{\xi}_{t,s} = -\partial_x H(t,x_{t,s}, \xi_{t,s})
\]
with initial value $(x_{s,s},\xi_{s,s})=(x,\xi)$ (the dot denoting the derivative with respect to $t$). It is easily seen that the map $\chi^{(t,s)}$ satisfies estimates of the type \eqref{agg10} with constants $C_\alpha$ independent of $s,t\in [0,T]$.\par
Let finally $U(t,s)$ be the propagator for the equation in \eqref{C1}, so that $U(s,s)=I$. \par We have the following result.
\begin{proposition}\label{pro14}
The propagator $U(t,s)$ belongs to $FIO_\hbar(\chi^{(t,s)})$ uniformly with respect to $t,s\in [0,T]$. As a consequence $U(t,s)$ is bounded on $M^{p,\hbar}_{v_s}(\Ren)$ for every $1\leq p\leq\infty$, $s\in\R$, with operator norm uniformly bounded with respect to $\hbar\in (0,1]$, $s,t\in [0,T]$.
\end{proposition}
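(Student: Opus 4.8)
The plan is to reduce the statement to the known scaled theory via the conjugation $\widehat{B}^\hbar := \widehat{D}_{h^{1/2}} U(t,s) \widehat{D}_{h^{-1/2}}$, exactly as in the equivalence between \eqref{agg11} and \eqref{agg12}. First I would observe that $\widehat{B}^\hbar$ is the propagator of a conjugated Cauchy problem: since $\widehat{D}_{h^{1/2}}$ is metaplectic, conjugating $\widehat{H(t)}$ by it again produces an $\hbar$-Weyl quantization of a symbol obtained by composing $H(t,\cdot)$ with the symplectic map $D_{h^{1/2}}$ (up to the $\hbar$-rescaling built into the Weyl calculus). The crucial point is that the resulting symbol still satisfies \textbf{Assumption (H)} \emph{with constants independent of $\hbar$}: indeed $\partial^\alpha_z\big(H(t, h^{1/2} x, h^{-1/2}\xi)\big)$ for $|\alpha|\ge 2$ carries a factor $h^{(|\alpha_x|-|\alpha_\xi|)/2}$, and since $h\le 1$ and $|\alpha|\ge 2$, such factors are bounded by a constant (one has to keep track that the mixed derivatives with $|\alpha_x|>|\alpha_\xi|$ produce positive powers of $h$, hence stay bounded, while those with $|\alpha_x|<|\alpha_\xi|$ would produce negative powers — but the structure $D_{h^{1/2}}$ sends $(x,\xi)\mapsto(h^{1/2}x, h^{-1/2}\xi)$, so the derivative in $x$ brings $h^{1/2}$ and the derivative in $\xi$ brings $h^{-1/2}$; the second-order-and-higher condition combined with $h\le1$ is what saves us). I would state this uniform bound as the key lemma of the argument.

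Next, granting that $\widehat{B}^\hbar$ is the propagator of a Schrödinger equation whose symbol satisfies \textbf{(H)} uniformly in $\hbar$, I would invoke the corresponding result from \cite{fio1} (the non-semiclassical counterpart of this very proposition, proved there for $FIO(\chi,s)$), which asserts that the propagator of such an equation lies in $FIO(\psi,s)$ for every $s\ge0$, where $\psi$ is its Hamiltonian flow, with constants $C_s$ depending only on the bounds in \textbf{(H)} and on $T$. Applied to $\widehat{B}^\hbar$, this gives $\widehat{B}^\hbar\in FIO(\psi_\hbar, s)$ for all $s$, with $C_s$ independent of $\hbar$, where $\psi_\hbar$ is precisely the Hamiltonian flow of the conjugated symbol. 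A direct computation with the Hamilton equations shows $\psi_\hbar = D_{h^{1/2}}\circ \chi^{(t,s)} \circ D_{h^{-1/2}}$, i.e. $\psi_\hbar = (\chi^{(t,s)})_\hbar$ in the notation $\chi_\hbar(z)= h^{-1/2}\chi(h^{1/2}z)$ introduced after \eqref{agg12}. By the equivalence \eqref{agg11} $\Leftrightarrow$ \eqref{agg12}, this is exactly the statement $U(t,s)\in FIO_\hbar(\chi^{(t,s)},s)$ with $\hbar$-independent constants, for every $s\ge 0$; intersecting over $s$ yields $U(t,s)\in FIO_\hbar(\chi^{(t,s)})$. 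The uniformity in $t,s\in[0,T]$ follows because, as already noted in the excerpt, $\chi^{(t,s)}$ satisfies \eqref{agg10} with constants independent of $t,s$, and the bounds in \textbf{(H)} are likewise uniform in $t$.

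Finally, the boundedness on $M^{p,\hbar}_{v_s}(\rd)$ with $\hbar$-uniform operator norm is immediate from Proposition \ref{pro13} applied to $U(t,s)\in FIO_\hbar(\chi^{(t,s)})$: Proposition \ref{pro13} gives, for each fixed flow $\chi^{(t,s)}$, boundedness on $M^{p,\hbar}_{v_s}$ for all $p\in[1,\infty]$ and all $s\in\R$ with norm controlled by the $C_s$'s and the $C_\alpha$'s of the flow; since all these are uniform in $t,s\in[0,T]$, so is the operator norm. I expect the main obstacle to be the bookkeeping in the key lemma — verifying that conjugation by $\widehat{D}_{h^{1/2}}$ really produces a symbol satisfying \textbf{(H)} \emph{uniformly} in $\hbar\in(0,1]$, which hinges on the interplay between the constraint $|\alpha|\ge 2$ and the anisotropic scaling $D_{h^{1/2}}$, and on correctly tracking how the $\hbar$ in the Weyl quantization \eqref{hWeyl} interacts with the dilation. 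Everything downstream is then a transcription of \cite{fio1} and \cite{fio3} through the conjugation dictionary.
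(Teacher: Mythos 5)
Your overall route coincides with the paper's: conjugate $U(t,s)$ by the dilation, recognize the conjugate $\tilde U(t,s)=\widehat{D}_{\hbar^{1/2}}U(t,s)\widehat{D}_{\hbar^{-1/2}}$ as the propagator of a rescaled Schr\"odinger equation, invoke the $\hbar=1$ result placing that propagator in $FIO(\cdot,s)$ for every $s$, identify its flow with $\chi^{(t,s)}_\hbar$, and finish with Proposition \ref{pro13}. However, your ``key lemma'' is stated incorrectly, and the justification you sketch for it would fail. The conjugated generator is not ${\rm Op}^w_1[H(t,h^{1/2}x,h^{-1/2}\xi)]$ and the scaling is not anisotropic: the $\hbar$ sitting on $\xi$ inside the $\hbar$-Weyl quantization \eqref{hWeyl} combines with the $\hbar^{-1/2}$ produced by the dilation to give $\hbar^{1/2}$ in \emph{both} variables, and dividing the equation by $i\hbar$ contributes an overall factor $\hbar^{-1}$, so that
\begin{equation*}
i\partial_t\tilde U(t,s)={\rm Op}^w_1\big[\hbar^{-1}H(t,\hbar^{1/2}x,\hbar^{1/2}\xi)\big]\tilde U(t,s).
\end{equation*}
Uniformity of Assumption (H) for this symbol is then immediate, since $\partial^\alpha_z\big[\hbar^{-1}H(t,\hbar^{1/2}z)\big]=\hbar^{|\alpha|/2-1}(\partial^\alpha_zH)(t,\hbar^{1/2}z)$ and $|\alpha|/2-1\geq0$ exactly when $|\alpha|\geq2$. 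By contrast, your bookkeeping with the factor $h^{(|\alpha_x|-|\alpha_\xi|)/2}$, taken literally, produces $h^{-1}$ for $\alpha=(0,2)$, which is unbounded as $\hbar\to0^+$; the constraint $|\alpha|\geq2$ together with $h\leq1$ does not rescue it. So the lemma must be restated with the correct rescaled Hamiltonian (including the $\hbar^{-1}$ prefactor), after which the rest of your argument goes through.

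A smaller point: the $\hbar=1$ input --- that the propagator of a Schr\"odinger equation whose Weyl symbol satisfies (H) has a Gabor matrix decaying like $v_n(w-\chi(z))$ for every $n$ along its Hamiltonian flow, with constants depending only on the (H)-bounds and $T$ --- is Tataru's \cite[Corollary 7.4]{tataru}, which is what the paper invokes; \cite{fio1} does not contain this statement. The identification of the flow of $\hbar^{-1}H(t,\hbar^{1/2}z)$ with $\chi^{(t,s)}_\hbar(z)=\hbar^{-1/2}\chi^{(t,s)}(\hbar^{1/2}z)$, the passage back through the equivalence \eqref{agg11}$\Leftrightarrow$\eqref{agg12}, and the final appeal to Proposition \ref{pro13} are exactly as in the paper.
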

\begin{proof}
Consider the propagator 
$\tilde{U}(t,s)$ satisfying 
\begin{equation}\label{0metap}
 i \partial_t \tilde{U}(t,s) ={\rm Op}^w_1 [\hbar^{-1} H(t,\hbar^{1/2}x,\hbar^{1/2}\xi)] \tilde{U}(t,s),\quad\tilde{U}(s,s) =I,
 \end{equation}
 and compare it with $U(t,s)$, which satisfies
 \begin{equation}\label{0metap1}
 i \hbar \partial_t U(t,s) =\widehat{H(t,\cdot)} U(t,s)\quad U(s,s) =I.
 \end{equation}
The two are related by the formula
 \begin{equation}\label{0metap2}
U(t,s)=\widehat{D}_{\hbar^{-1/2}} \tilde{U}(t,s)\widehat{D}_{\hbar^{1/2}}.
\end{equation}
 Indeed we have
\begin{align*}
i\hbar\partial_t U(t,s)&=i\hbar \partial_t \widehat{D}_{\hbar^{-1/2}}\tilde{U}(t,s)\widehat{D}_{\hbar^{1/2}}\\
&= \widehat{D}_{\hbar^{-1/2}} {\rm Op}^w_1[H(t,\hbar^{1/2}x,\hbar^{1/2}\xi)] \tilde{U}(t,s) \widehat{D}_{\hbar^{1/2}}\\
&= {\rm Op}^w_1[H(t,x,\hbar _\xi)]\widehat{D}_{\hbar^{-1/2}}\tilde{U}(t,s)\widehat{D}_{\hbar^{1/2}}\\
&={\rm Op}^w_1[H(t,x,\hbar \xi)]U(t,s)\\
&=\widehat{H(t,\cdot)}U(t,s).
\end{align*}
Now, it was proved in \cite[Corollary 7.4]{tataru} that $\tilde{U}(t,s)$ satisfies an estimate of the form \eqref{agg12}, namely 
\[
|\langle \tilde{U}(t,s) \mathcal{T}(z) g,\mathcal{T}(w) g\rangle|\leq \frac{C_n}{v_n(w-\chi^{(t,s)}_\hbar(z))},\quad z,w\in\rdd
\]
for every $n\in\mathbb{N}$, where \[
\chi^{(t,s)}_\hbar(z)=\hbar^{-1/2}\chi^{(t,s)}(\hbar^{1/2}z)
\]
 is the flow corresponding to the Hamiltonian
\[
\hbar^{-1} H(t,\hbar^{1/2}z).
\]
As a consequence, $U(t,s)\in FIO_\hbar(\chi^{(t,s)})$ uniformly with respect to $s,t\in [0,T]$ and the desired continuity result follows from Proposition \ref{pro13}.
\end{proof}

\section{A parametrix construction for Schr\"{o}dinger equations}
One can construct approximate solutions to the problem \eqref{C1} by the the following construction. Let us first consider the case when the initial datum in a Gaussian function, possibly translated and modulated. From now on we fix $s=0$ in \eqref{C1} as initial time.\par
 Consider the solution $z_t=(x_t,\xi_t)$ to the Hamiltonian system
 \begin{equation}\label{HS} \dot{x}_t = \partial_p H(t,x_t, \xi_t),\quad \dot{\xi}_t = -\partial_x H(t,x_t, \xi_t)
 \end{equation}
 with initial value (at $t=0$) $z_0=(x_0,\xi_0)$. Let $\chi_t$ be the Hamiltonian flow defined by $H(t,z)$, hence \begin{equation}\label{flow}
  z_t=\chi_t(z_0).
 \end{equation} Define
 \begin{equation}\label{Hquad}
 H^{(l)}_{z_0}(t,z)=\sum_{|\gamma|=l}\frac1 {\gamma!}\partial^\gamma_z H(t,z_t)z^\gamma,\quad z\in\rdd,\ l\geq2.
 \end{equation}
 It is well-known that the solution to the corresponding operator Schr\"odinger equation with $\hbar=1$, namely
 \begin{equation}\label{metap}
 i \partial_t \widehat{S_t}(z_0) ={\rm Op}^w_1 [H^{(2)}_{z_0}(t)] \widehat{S_t}(z_0)\quad\widehat{S_0}(z_0) =I,
 \end{equation}
 is a metaplectic operator $\widehat{S_t}(z_0)$ corresponding to the symplectic matrix $S_t(z_0)$ via the metaplectic representation \cite{deGossonbook,ComberscureRobert2012}. In fact $S_t(z_0)$ is the (linear) Hamiltonian flow of $H^{(2)}_{z_0}(t)$ and therefore, as a consequence of \eqref{ipotesi}, the entries of the matrix $S_t(z_0)$ are bounded functions of $t\in[0,T]$ and $z_0\in\rdd$.\par
Now, an asymptotic solution to \eqref{C1}, modulo $O(\hbar^{(N+1)/2})$, $N\in\R$, with  initial value being the coherent state $\phi^{\hbar}_{z_0}$:
 \begin{equation}\label{C1z0}
\begin{cases} i \hbar
\partial_t u =\Ha u\\
u(0)=\phi^{\hbar}_{z_0},
\end{cases}
\end{equation}
is provided by the so-called Gaussian beam (see, e.g.,  \cite[Section 3]{BBCNACHA2016})
\begin{equation}\label{solapprox}
 \phi^{\hbar,N}_{z_0}(x)=e^{\frac {i} {\hbar}\delta(t,z_0)}\widehat{\mathcal{T}}^{\hbar}(z_t)\widehat{D}_{\hbar^{-1/2}} \widehat{S_t}(z_0)\sum_{j=0}^N \hbar^{j/2}b_j(t,x)\phi_0(x).
\end{equation}
Here $b_0(t,x)=1$ and for $j>0$, $b_j(t,x)$ is a polynomial in $x$, having coefficients depending on $t,z_0$ which are bounded (for details, see \cite[Section 3]{BBCNACHA2016}). The symmetrized action $\delta$ is defined by
\begin{equation}\label{symmaction}
\delta(t,z_0)=\int_0^t \Big(\frac12 \sigma(z_s,\dot{z}_s)-H(s,z_s)\Big)\,ds,
\end{equation}
with $\sigma$ being the standard symplectic form.\par
The functions $\phi^{\hbar,N}_{z_0}$ turn out to be approximate solutions in the sense that
\begin{multline}\label{agg2}
R^{(N)}_{z_0}(t,\cdot):=
(i\hbar \partial_t -\widehat{H(t)})\phi^{\hbar,N}_{z_0}(t,\cdot)\\
=-e^{\frac{i}{\hbar}\delta(t,z_0)}\th(z_t)\widehat{D}_{\hbar^{-1/2}}\st(z_0) \Big(
\sum_{l+k\geq N+3\atop {3\leq l\leq N+2 \atop 0\leq k\leq N}}\hbar^{(l+k)/2}{\rm Op}^w_1[H^{(l)}_{z_0}(t,S_t(z_0)z) ]b_k(t,\cdot)\phi_0\\
+\sum_{k=0}^N \hbar^{(N+3+k)/2}{\rm Op}^w_1[r^{(N+3)}_{z_0}(t,S_t(z_0)z)]b_k(t,\cdot)\phi_0\Big)
\end{multline}
is $O(\hbar^{(N+3)/2})$, where we set
\begin{equation}
\label{agg3}
r^{(N+3)}_{z_0}(t,z):=\frac{1}{(N+3)!}\sum_{|\gamma|=N+3} \int_0^1 \partial^\gamma_z H(t,z_t+\theta\hbar^{1/2} z) z^\gamma (1-\theta)^{N+2}\, d\theta.
\end{equation} 
In \cite{BBCNACHA2016} it was in fact introduced a parametrix via Gabor frames, valid for arbitrary $L^2$ initial data. Such a parametrix, having the previous Gaussian beams as building blocks, is constructed as follows.\par
For $\a\b<1$, $\Lambda=\alpha \bZ^n\times \beta \bZ^n$, consider the $\hbar$-Gabor frame $\G^\hbar(\phi_0^\hbar,h^{1/2}\Lambda)$. Let $\gamma_0^h$ be the dual window in $\cS(\rd)$ defined in \eqref{fh}.  For $N\geq0$, $t\in [0,T]$, the parametrix to \eqref{C1} is defined by
\begin{equation}\label{Gab_exp}
[U^{(N)}(t) f](t,\cdot)=\sum_{z_0\in h^{1/2}\Lambda}\langle f,\th(z_0)\gamma_0^h\rangle \phi_{z_0}^{\hbar,N}(t,\cdot).
\end{equation}
Observe that $U^{(N)}(0) f=f$.\par
This is our main result.
\begin{theorem}\label{mainteo}
Consider $s\in \bR$ and the weight function $v_s$ defined in \eqref{vs}. Under the {Assumption {\bf (H)}} and with the above notation, there exists a constant $C=C(T)$ independent of $\hbar$ such that, for every $f\in M^{p,\hbar}_{v_s}(\rd)$, $1\leq p\leq \infty$,
\begin{equation}\label{teoa0}
\|U^{(N)}(t)f\|_{M^{p,\hbar}_{v_s}(\rd)}\leq C\|f\|_{M^{p,\hbar}_{v_s}(\rd)}\quad \forall t\in [0,T]
\end{equation}
and
\begin{equation}\label{teob0}
\|(i\hbar \partial_t-\widehat{H(t)})U^{(N)}(t)f\|_{M^{p,\hbar}_{v_s}(\rd)}\leq C\hbar^{(N+3)/2}\|f\|_{M^{p,\hbar}_{v_s}(\rd)}\quad \forall t\in[0,T].
\end{equation}
If $U(t)$ denotes the exact propagator, for every $f\in M^{p,\hbar}_{v_s}(\rd)$,
\begin{equation}\label{teoc0}
\|(U^{(N)}(t)-U(t))f\|_{M^{p,\hbar}_{v_s}(\rd)}\leq C t \hbar^{(N+1)/2}\|f\|_{M^{p,\hbar}_{v_s}(\rd)} \quad \forall t\in[0,T].
\end{equation}
\end{theorem}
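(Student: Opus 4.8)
The plan is to reduce everything to the rescaled picture, where $\hbar=1$ effectively, and then invoke the results of \cite{BBCNACHA2016} together with the $FIO_\hbar$ machinery of Section~3. Concretely, applying $\widehat{D}_{h^{1/2}}$ to the parametrix $U^{(N)}(t)$ conjugates it to an operator $\widetilde U^{(N)}(t)=\widehat{D}_{h^{1/2}}U^{(N)}(t)\widehat{D}_{h^{-1/2}}$ whose building blocks become the \emph{standard} Gaussian beams for the rescaled Hamiltonian $\hbar^{-1}H(t,\hbar^{1/2}z)$ (cf. the computation in the proof of Proposition~\ref{pro14}), and the $\hbar$-Gabor frame $\G^\hbar(\phi_0^\hbar,h^{1/2}\Lambda)$ becomes the ordinary Gabor frame $\G(\phi_0,\a(2\pi)^{1/2},\b(2\pi)^{-1/2})$ via \eqref{linkfp}. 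Since $\widehat{D}_{h^{1/2}}$ is an isometric isomorphism of $M^{p,\hbar}_{v_s}(\rd)$ onto $M^p_{v_s}(\rd)$, estimates \eqref{teoa0}, \eqref{teob0}, \eqref{teoc0} are equivalent to the corresponding estimates for $\widetilde U^{(N)}(t)$ on the fixed space $M^p_{v_s}(\rd)$, with the same constants.

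For \eqref{teoa0} I would argue as follows. By the definition \eqref{Gab_exp}, $U^{(N)}(t)$ factors as the composition of: the $\hbar$-analysis operator $C_{\gamma_0}^\hbar$ (using the dual window $\gamma_0^h$), bounded from $M^{p,\hbar}_{v_s}(\rd)$ into $\ell^p_{(v_s)_\hbar}(h^{1/2}\Lambda)$ by the Proposition after Corollary~\ref{C3} (with $\hbar$-independent norm); followed by the $\hbar$-synthesis-type operator $c\mapsto \sum_{z_0}c_{z_0}\phi^{\hbar,N}_{z_0}(t,\cdot)$. The key point is that this last map is bounded on $M^{p,\hbar}_{v_s}$ uniformly in $\hbar$ and $t\in[0,T]$; after conjugation by $\widehat{D}_{h^{1/2}}$ this is exactly the boundedness, uniform in $\hbar$, of the corresponding synthesis map built from the rescaled Gaussian beams, which follows from the explicit form \eqref{solapprox}: each beam is $e^{(i/\hbar)\delta}\widehat{\mathcal{T}}^\hbar(z_t)\widehat{D}_{\hbar^{-1/2}}\widehat{S_t}(z_0)$ applied to a fixed polynomial-times-Gaussian, the phase factor is a unimodular constant, $\widehat{\mathcal{T}}^\hbar(z_t)$ acts trivially on the $M^p_{v_s}$ norm up to the moderate weight shift controlled by $v_s(z_t)\asymp v_s(z_0)$ (bilipschitz flow, Assumption~(H)), and $\widehat{S_t}(z_0)$ is metaplectic with symplectic matrix whose entries are bounded in $t,z_0$, so its action on $M^p_{v_s}$ is bounded uniformly. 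This is essentially Theorem~4.1 of \cite{fio1} / the arguments of \cite{BBCNACHA2016} applied term by term and summed against the $\ell^p$ sequence.

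For \eqref{teob0} I would use the exact identity \eqref{agg2}--\eqref{agg3} for the residual of a single beam: $R^{(N)}_{z_0}(t,\cdot)$ is a sum, with coefficients $\hbar^{(l+k)/2}$ of total order $\geq(N+3)/2$, of metaplectically-transported polynomial-times-Gaussian profiles, with the Taylor-remainder piece $r^{(N+3)}_{z_0}$ controlled in any Schwartz seminorm uniformly in $t,z_0,\hbar$ thanks to \eqref{ipotesi}. Each such profile lies in $M^p_{v_s}$ with seminorm bounds uniform in the parameters, so by the same synthesis estimate as above, $\|(i\hbar\partial_t-\widehat{H(t)})U^{(N)}(t)f\|_{M^{p,\hbar}_{v_s}}\lesssim \hbar^{(N+3)/2}\|C^\hbar_{\gamma_0}f\|_{\ell^p_{(v_s)_\hbar}}\lesssim \hbar^{(N+3)/2}\|f\|_{M^{p,\hbar}_{v_s}}$. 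Finally, \eqref{teoc0} follows from \eqref{teob0} by Duhamel: writing $U^{(N)}(t)f-U(t)f = \frac{1}{i\hbar}\int_0^t U(t)U(\tau)^{-1}\big[(i\hbar\partial_\tau-\widehat{H(\tau)})U^{(N)}(\tau)f\big]\,d\tau$ (valid since $U^{(N)}(0)=I=U(0)$), using that $U(t)U(\tau)^{-1}$ is, by Proposition~\ref{pro14}, bounded on $M^{p,\hbar}_{v_s}$ uniformly in $\hbar$ and $0\le\tau\le t\le T$, one gains a factor $\hbar^{-1}\cdot t\cdot\hbar^{(N+3)/2}=t\,\hbar^{(N+1)/2}$, which is exactly \eqref{teoc0}.

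I expect the main obstacle to be the \emph{uniformity in $\hbar$} of the synthesis estimate for the Gaussian beams, i.e. verifying that the sum $\sum_{z_0\in h^{1/2}\Lambda}c_{z_0}\phi^{\hbar,N}_{z_0}(t,\cdot)$ is controlled by $\|c\|_{\ell^p_{(v_s)_\hbar}(h^{1/2}\Lambda)}$ with an $\hbar$-independent constant. After conjugation this is a statement about a genuine (non-Gaussian, metaplectically-distorted) synthesis operator on a fixed lattice, and one must check that the family of atoms $\{\widehat{S_t}(z_0)(b_j(t,\cdot)\phi_0)\}$, shifted along the $\hbar$-dependent orbit, still enjoys uniform decay/overlap estimates; this is where one invokes that $S_t(z_0)$ has entries bounded in $(t,z_0)$ and that $b_j(t,\cdot)$ are polynomials with bounded coefficients, so the atoms live in a bounded subset of a fixed $M^1_{v_s}$-ball, and a standard Schur-test / amalgam argument (as in \cite{book,fio1}) closes the estimate uniformly. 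The bookkeeping of powers of $\hbar$ in \eqref{agg2} and the Duhamel gain of $\hbar^{-1}$ are routine once this uniform boundedness is in hand.
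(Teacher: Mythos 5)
Your proposal follows essentially the same route as the paper: boundedness of the $\hbar$-analysis operator via the frame characterization \eqref{e2}, uniform off-diagonal decay of the Gabor coefficients of each beam concentrated at $z_t$ (which the paper obtains from Lemmas \ref{3.2} and \ref{3.3}, i.e.\ that the metaplectically distorted atoms lie in a bounded subset of $\cS^\hbar$), a Schur test exploiting the bi-Lipschitz flow and $v_s$-moderateness, the residual identity \eqref{agg2} for \eqref{teob0}, and Duhamel plus Proposition \ref{pro14} for \eqref{teoc0}. The only cosmetic difference is that you conjugate the whole parametrix by $\widehat{D}_{h^{1/2}}$ up front, whereas the paper works directly with the $\hbar$-frame coefficients and performs that rescaling inside the two lemmas.
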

The proof relies on the following lemmas.
\begin{lemma}\label{3.2}
Consider $t\in [0,T]$, $z_0\in\rdd$, and the metaplectic operator $\st(z_0)$, solution to \eqref{metap}. Define \begin{equation}\label{sth}
 \sht(z_0)=\widehat{D}_{\hbar^{-1/2}}\st(z_0)\widehat{D}_{\hbar^{1/2}}.
\end{equation}
Then the metaplectic operator $\sht$ is continuous  from $\cS^\hbar$ to $\cS^\hbar$ with operator seminorms bounded with respect to $t\in[0,T]$, $z_0\in\rdd$, and independent of $\hbar$.
\end{lemma}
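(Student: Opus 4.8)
The plan is to conjugate away the $\hbar$-dependent dilations, reducing the claim to a \emph{uniform} mapping property of the ordinary metaplectic representation over a \emph{compact} family of symplectic matrices, and then to deduce that uniform bound from the continuity of the metaplectic representation on $\cS(\rd)$ together with a Banach--Steinhaus argument. First, by the very definition \eqref{esseh} of $\cS^\hbar(\rd)$, the dilation $\widehat{D}_{h^{1/2}}$ is an isometric isomorphism of $\cS^\hbar(\rd)$ onto $\cS(\rd)$, in the sense that $\|f\|_{\cS^\hbar,k}=\|\widehat{D}_{h^{1/2}}f\|_k$ for every Schwartz seminorm index $k$; hence the assertion is equivalent to the uniform (in $t,z_0,\hbar$) boundedness on $\cS(\rd)$ of the conjugated operator $\widehat{D}_{h^{1/2}}\,\sht(z_0)\,\widehat{D}_{h^{-1/2}}$. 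Using the group law $\widehat{D}_\lambda\widehat{D}_\mu=\widehat{D}_{\lambda\mu}$, the definition \eqref{sth} of $\sht$, and $h=2\pi\hbar$, this conjugated operator equals $\widehat{D}_{(2\pi)^{1/2}}\,\st(z_0)\,\widehat{D}_{(2\pi)^{-1/2}}$: all the $\hbar$'s have cancelled ($\st(z_0)$ itself being $\hbar$-independent, since it solves \eqref{metap}), and the result is again a metaplectic operator, with underlying symplectic matrix $\widetilde{S}_t(z_0):=D_{(2\pi)^{1/2}}\,S_t(z_0)\,D_{(2\pi)^{1/2}}^{-1}$.

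Next I would invoke the fact, recalled just after \eqref{metap}, that under Assumption {\bf (H)} the entries of $S_t(z_0)$ are bounded functions of $t\in[0,T]$ and $z_0\in\rdd$, by a constant depending only on $T$ and the $C_\alpha$ in \eqref{ipotesi}; since $D_{(2\pi)^{\pm1/2}}$ are fixed matrices, the entries of $\widetilde{S}_t(z_0)$ are likewise bounded by some $M=M(T,\{C_\alpha\})$. Hence all the matrices $\widetilde{S}_t(z_0)$ lie in the set $K=\{S\in{\rm Sp}(n,\R):\|S\|\le M\}$, which is \emph{compact}: ${\rm Sp}(n,\R)$ is closed in $\R^{4n^2}$ (it is cut out by the polynomial equations $S^T J S=J$) and $K$ is closed and bounded. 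Since the covering projection $\pi^\hbar:Mp(n)\to{\rm Sp}(n,\R)$ is a two-sheeted, hence proper, covering, the preimage $\widetilde{K}:=(\pi^\hbar)^{-1}(K)$ is a compact subset of $Mp(n)$, containing all the operators $\widehat{D}_{(2\pi)^{1/2}}\,\st(z_0)\,\widehat{D}_{(2\pi)^{-1/2}}$ for $t\in[0,T]$ and $z_0\in\rdd$.

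Then I would use that the metaplectic representation $Mp(n)\ni\widehat{S}\mapsto\widehat{S}$ is continuous into the space of continuous linear operators on $\cS(\rd)$ with the topology of pointwise convergence (standard; see \cite{deGossonbook,ComberscureRobert2012}). Thus, for each fixed $f\in\cS(\rd)$, the set $\{\widehat{S}f:\widehat{S}\in\widetilde{K}\}$ is the continuous image of the compact set $\widetilde{K}$, hence bounded in $\cS(\rd)$; since $\cS(\rd)$ is Fréchet, and therefore barrelled, the Banach--Steinhaus theorem gives that $\{\widehat{S}:\widehat{S}\in\widetilde{K}\}$ is equicontinuous. Equivalently, for each $k$ there exist $\ell=\ell(k)$ and $C_k=C_k(M)$ with $\|\widehat{S}f\|_k\le C_k\|f\|_\ell$ for all $f\in\cS(\rd)$ and all $\widehat{S}\in\widetilde{K}$. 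Combining this with the reduction of the first paragraph, $\|\sht(z_0)f\|_{\cS^\hbar,k}=\|\widehat{D}_{(2\pi)^{1/2}}\,\st(z_0)\,\widehat{D}_{(2\pi)^{-1/2}}\widehat{D}_{h^{1/2}}f\|_k\le C_k\|f\|_{\cS^\hbar,\ell}$, with $C_k$ independent of $\hbar\in(0,1]$, $t\in[0,T]$ and $z_0\in\rdd$, which is precisely the assertion.

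The step I expect to be the main obstacle is making the continuity of the metaplectic representation on $\cS(\rd)$ quantitative enough to convert the compactness of $K\subset{\rm Sp}(n,\R)$ into genuinely \emph{uniform} Schwartz-seminorm bounds. The soft route above, via barrelledness of $\cS(\rd)$, sidesteps explicit constants; a more hands-on alternative is to use the explicit oscillatory Gaussian integral kernels of the metaplectic operators --- whose phase is a quadratic form with coefficients rational in the entries of $S$, and whose amplitude involves $|\det B|^{-1/2}$ times a locally constant eighth root of unity --- and to integrate by parts, patching over the finitely many coordinate charts needed to cover the compact set $K$, since a single such factorization is valid only away from the locus where the pertinent block degenerates.
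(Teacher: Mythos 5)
Your proof is correct, and its skeleton coincides with the paper's: both arguments conjugate $\sht(z_0)$ by the dilations $\widehat{D}_{h^{\pm 1/2}}$ that define the seminorms of $\cS^\hbar(\rd)$, observe that all the $\hbar$'s cancel, leaving the $\hbar$-independent family $\widehat{D}_{(2\pi)^{1/2}}\st(z_0)\widehat{D}_{(2\pi)^{-1/2}}$, and then invoke a bound for the Schwartz seminorms of that family which is uniform in $t\in[0,T]$ and $z_0\in\rdd$. (Incidentally, your factorization $\widehat{D}_{\hbar^{-1/2}}=\widehat{D}_{h^{-1/2}}\widehat{D}_{(2\pi)^{1/2}}$ is the right one; the paper's displayed identity has the $(2\pi)^{\pm 1/2}$ exponents interchanged, a harmless typo.) Where you genuinely differ is in how the uniform bound is obtained. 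The paper simply cites the proof of Theorem 3.4 in \cite{BBCNACHA2016}, where the operators $\st(z_0)$ are shown to be bounded on $\cS(\rd)$ with seminorms controlled by the (uniformly bounded) entries of $S_t(z_0)$ --- essentially the ``hands-on'' kernel route you mention at the end as an alternative. You instead give a soft, self-contained argument: the matrices $D_{(2\pi)^{1/2}}S_t(z_0)D_{(2\pi)^{1/2}}^{-1}$ lie in a compact subset $K$ of ${\rm Sp}(n,\R)$, the preimage of $K$ under the proper two-sheeted covering $\pi^\hbar$ is compact in $Mp(n)$, the orbit maps $\widehat{S}\mapsto \widehat{S}f$ are continuous into $\cS(\rd)$, so the family is pointwise bounded, and Banach--Steinhaus on the barrelled (Fr\'echet) space $\cS(\rd)$ upgrades this to equicontinuity. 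Your route avoids the external reference and any explicit oscillatory-kernel computation, at the price of non-quantitative constants $C_k(M)$; the paper's citation yields, in principle, constants explicit in $\sup_{t,z_0}\|S_t(z_0)\|$, hence in $T$ and the $C_\alpha$ of \eqref{ipotesi}. Both arguments are complete proofs of the lemma.
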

\begin{proof}
It was already shown in the proof of Theorem 3.4 in\cite{BBCNACHA2016} that 
 metaplectic operators $\st(z_0)$ are bounded $\cS(\rd)\to\cS(\rd)$, with the entries of the matrix $S_t(z_0)$ being bounded  functions of $t\in[0,T]$ and $z_0\in\rdd$. Hence the Schwartz seminorms are bounded with respect to $t,z_0$.
 Using $$\sht(z_0)=\widehat{D}_{\hbar^{-1/2}}\st(z_0)\widehat{D}_{\hbar^{1/2}}=
 \widehat{D}_{h^{-1/2}} \widehat{D}_{(2\pi)^{-1/2}}\st(z_0)\widehat{D}_{(2\pi)^{1/2}}\widehat{D}_{h^{1/2}},$$
 we immediately obtain the result.
\end{proof}	 
\begin{lemma}\label{3.3}
	Let $\mathcal{B}$ be a fixed bounded subset of $\cS^\hbar(\rd)$ (as a Fr\'echet space with the seminorms in \eqref{esseh}) and $g\in\mathcal{S}(\rd)$. Then for every $N\geq 0$ there exists a positive constant $C_N$ independent of $\hbar$ such that 
	\begin{equation}\label{dec}
|\la f, \th(z)g^h\ra|\leq \frac{C_N}{v_N(h^{-1/2}z)},  \quad\forall z\in\rdd
	\end{equation}
	for every $f\in\mathcal{B}$.
\end{lemma}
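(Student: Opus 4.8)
The plan is to strip off the $\hbar$-rescaling by conjugating with the metaplectic dilation $\widehat{D}_{h^{1/2}}$, thereby reducing \eqref{dec} to the classical fact that the short-time Fourier transform of a Schwartz function against a Schwartz window is rapidly decreasing, with a quantitative bound in terms of finitely many Schwartz seminorms.

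First I would rewrite the $\hbar$-STFT pairing in classical terms. Recall from Section~3 the identity $\widehat{D}_{h^{-1/2}}\widehat{\mathcal{T}}(z)g=\widehat{\mathcal{T}}^\hbar(h^{1/2}z)g^h$; substituting $h^{-1/2}z$ for $z$ it becomes $\widehat{\mathcal{T}}^\hbar(z)g^h=\widehat{D}_{h^{-1/2}}\widehat{\mathcal{T}}(h^{-1/2}z)g$. Since $\widehat{D}_{h^{-1/2}}$ is unitary on $L^2(\rd)$ with adjoint $\widehat{D}_{h^{1/2}}$ and $f,g^h\in\mathcal{S}(\rd)$, for every $f\in\mathcal{B}$ and $z\in\rdd$ one gets $\langle f,\widehat{\mathcal{T}}^\hbar(z)g^h\rangle=\langle\widehat{D}_{h^{1/2}}f,\widehat{\mathcal{T}}(h^{-1/2}z)g\rangle$, so by \eqref{STFTT}
\[
|\langle f,\widehat{\mathcal{T}}^\hbar(z)g^h\rangle|=\big|V_g(\widehat{D}_{h^{1/2}}f)(h^{-1/2}z)\big|.
\]
Next I would record that, by the very definition \eqref{esseh} of the seminorms on $\mathcal{S}^\hbar(\rd)$, saying that $\mathcal{B}$ is bounded in $\mathcal{S}^\hbar(\rd)$ (uniformly for $\hbar\in(0,1]$) is the same as saying that $\widetilde{\mathcal{B}}:=\{\widehat{D}_{h^{1/2}}f:\ f\in\mathcal{B},\ 0<\hbar\le1\}$ is a bounded subset of the Fréchet space $\mathcal{S}(\rd)$ with its standard seminorms $\|\cdot\|_k$; in particular $\sup_{\phi\in\widetilde{\mathcal{B}}}\|\phi\|_k=:M_k<\infty$ for every $k\in\bN$.

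Then I would invoke the standard fact (see, e.g., \cite{book}) that the map $(\phi,g)\mapsto V_g\phi$ sends $\mathcal{S}(\rd)\times\mathcal{S}(\rd)$ continuously into $\mathcal{S}(\rdd)$: quantitatively, for every $N\ge0$ there are $C>0$ and $k\in\bN$ with $v_N(w)\,|V_g\phi(w)|\le C\,\|g\|_k\,\|\phi\|_k$ for all $\phi,g\in\mathcal{S}(\rd)$ and $w\in\rdd$. (If one prefers to avoid a citation, this is elementary: integrating by parts in $t$ in \eqref{STFTdef} yields the decay in the $\xi$-variable, while the joint decay of $\phi$ and of the translated window, together with $\langle t\rangle^{-M}\langle t-x\rangle^{-M}\lesssim\langle x\rangle^{-M}$, yields the decay in the $x$-variable.) Applying this with $\phi=\widehat{D}_{h^{1/2}}f\in\widetilde{\mathcal{B}}$ and the fixed window $g$ gives $|V_g(\widehat{D}_{h^{1/2}}f)(w)|\le C\|g\|_k M_k\,v_N(w)^{-1}$, and combining with the displayed identity above (with $w=h^{-1/2}z$) produces \eqref{dec} with $C_N:=C\|g\|_k M_k$, manifestly independent of $\hbar$.

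There is no genuinely hard step here; the one point requiring care — rather than a real obstacle — is the uniformity in $\hbar$: one must verify that the conjugation performed in the first step absorbs \emph{all} the $\hbar$-dependence into the argument $h^{-1/2}z$, so that afterwards the window $g$ is $\hbar$-free and the data $\widehat{D}_{h^{1/2}}f$ range over the single, $\hbar$-independent bounded set $\widetilde{\mathcal{B}}\subset\mathcal{S}(\rd)$. Once this is in place the $\hbar$-independence of $C_N$ is automatic from the quantitative STFT estimate.
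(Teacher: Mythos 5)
Your proposal is correct and follows essentially the same route as the paper's proof: conjugate by $\widehat{D}_{h^{1/2}}$ to reduce to the classical STFT of functions ranging over a fixed bounded subset of $\cS(\rd)$, and then invoke the continuity of $V_g:\cS(\rd)\to\cS(\rdd)$ (the paper cites \cite[Theorem 11.2.5]{book} for exactly this). Your version merely makes the quantitative seminorm bookkeeping and the uniformity in $\hbar$ more explicit than the paper does.
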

\begin{proof}
	Observe that the mapping $\widehat{D}_{h^{1/2}}: \cS^\hbar(\rd)\to \cS(\rd)$ is an  isomorphism. Hence, for $f\in\mathcal{B}\subset\cS^\hbar(\rd)$, 
	$\widehat{D}_{h^{1/2}}f$ belongs to a bounded subset of $\cS(\rd)$. As a consequence, for all $N\geq0$, there exists $C_N>0$ (independent of $\hbar$) such that 
	$$|\la \widehat{D}_{h^{1/2}}f, \widehat{\Tau}(z)g\ra|=|V_g (\widehat{D}_{h^{1/2}}f)(z)| \leq C_N v_{-N} (z),\quad z\in\rdd
	$$
because $V_g:\mathcal{S}(\rd)\to\mathcal{S}(\rdd)$ is continuous (see e.g., \cite[Theorem 11.2.5]{book}). \par
	Now $$\la \widehat{D}_{h^{1/2}}f, \widehat{\Tau}(z)g\ra=\la f ,\widehat{\Tau}^\hbar(h^{1/2}z)g^h\ra$$ and the claim follows.
\end{proof}

We now are ready to prove Theorem \ref{mainteo}.\par
\begin{proof}[Proof of Theorem \ref{mainteo}]
Let us first prove \eqref{teoa0}. We consider the case $p<\infty$, and we leave to the reader the case $p=\infty$, which is completely similar. For $f\in M^{p,\hbar}_{v_s}(\rd)$ we have, using \eqref{e2} with $g^h=\gamma_0^h$ defined in \eqref{fh},
	\begin{align}
\|U^{(N)}(t)f\|_{M^{p,\hbar}_{v_s}}^p&\asymp \sum_{w\in h^{1/2}\Lambda}|\la U^{(N)}(t)f, \widehat{\Tau}^\hbar(w)\gamma_0^h\ra|^p {v_s}(h^{-1/2}w)^p\nonumber\\
&= \sum_{w\in h^{1/2}\Lambda}|\la \sum_{z_0\in h^{1/2}\Lambda}\langle f,\th(z_0)\gamma_0^h\rangle \phi_{z_0}^{\hbar,N}(t,\cdot), \widehat{\Tau}^\hbar(w)\gamma_0^h\ra|^p {v_s}(h^{-1/2}w)^p\nonumber\\
&\leq \sum_{w\in h^{1/2}\Lambda}\Big(\sum_{z_0\in h^{1/2}\Lambda}|\la f ,\th(z_0)\gamma_0^h\rangle|\,| \la \phi_{z_0}^{\hbar,N}(t,\cdot), \widehat{\Tau}^\hbar(w)\gamma_0^h\ra|\Big)^p {v_s}(h^{-1/2}w)^p.\label{agg0}
	\end{align}
	Now, 
	\begin{align*}
	|\la  \phi_{z_0}^{\hbar,N}(t,\cdot), \widehat{\Tau}^\hbar(w)\gamma_0^h\ra|&\leq 
	\sum_{j=0}^N \hbar^{j/2}|\la \widehat{\mathcal{T}}^{\hbar}(z_t)\widehat{D}_{\hbar^{-1/2}} \widehat{S_t}(z_0)
	b_j(t,\cdot)\phi_0, \widehat{\Tau}^\hbar(w)\gamma_0^h\ra|\\
	&= \sum_{j=0}^N \hbar^{j/2}|\la \widehat{D}_{\hbar^{-1/2}} \widehat{S_t}(z_0)
	b_j(t,\cdot)\phi_0, \widehat{\Tau}^\hbar(w-z_t)\gamma_0^h\ra|\\
	&=\sum_{j=0}^N \hbar^{j/2}|\la  \widehat{S_t}^\hbar(z_0)
	b_j^h(t,(2\pi)^{1/2}\cdot)\psi_0^h, \widehat{\Tau}^\hbar(w-z_t)\gamma_0^h\ra|.\\
	\end{align*}
	Observe that $b_j(t,(2\pi)^{1/2}\cdot)\psi_0\in\cS(\rd)$ for every $j=0,\dots, N$ and so $b_j^h(t,(2\pi)^{1/2}\cdot)\psi_0^h\in\cS^{\hbar}(\rd)$
	 with seminorms independent of $h$. Using Lemma \ref{3.2} we have  \[
	 \widehat{S_t}^\hbar(z_0)
	b_j^h(t,(2\pi)^{1/2}\cdot)\psi_0^h \in\cS^{\hbar}(\rd)
	\]
	 with seminorms uniformly bounded with respect to $h$. Hence,
	by Lemma \ref{3.3} we obtain
	$$|\la  \widehat{S_t}^\hbar(z_0)
	b_j^h(t,(2\pi)^{1/2}\cdot)\psi_0^h, \widehat{\Tau}^\hbar(w-z_t)\gamma_0^h\ra|\leq \frac{C_{\tilde{N}} }{v_{\tilde{N}}( h^{-1/2}(w-z_t))},\quad\forall z_0,w\in\rdd$$
for a constant $C_{\tilde{N}}>0$ independent of $\hbar$.\par
	These estimates yield, for every $\tilde{N}\geq0$,
	\begin{equation}\label{agg1}
		|\la  \phi_{z_0}^{\hbar,N}(t,\cdot), \widehat{\Tau}^\hbar(w)\gamma_0^h\ra|\leq \frac {C_{\tilde{N},N}}{ v_{\tilde{N}}(h^{-1/2}(w-z_t))}
		\end{equation}
	for a convenient $C_{\tilde{N},N}>0$ independent of $\hbar$.\par
	Now, by the second inequality in \eqref{e2} (again with $g^h=\gamma_0^h$) we have 
	\[
	\sum_{z_0\in h^{1/2}\Lambda}|\la f,\widehat{\Tau}^\hbar(z_0)\gamma_0^h\ra|^p{v_s}(h^{-1/2}z)^p\leq C \|f\|^p_{M^{p,\hbar}_{v_s}(\rd)},
	\] 
 for a suitable $C>0$. Using the $v_{|s|}$ moderateness of ${v_s}$, we can write ${v_s}(w)\leq C v_{|s|}(w-z_t){v_s}(z_t)$  for a convenient $C>0$. 
 Easily follows from the assumption \eqref{ipotesi} (see e.g. \cite{fio3}) that the map $z_0\to z_t$  is globally Lipschitz continuous and allows the equivalence 
 \begin{equation}\label{flowequiv}
 v_s(z_t)\asymp v_s(z_0)\quad \forall z_0\in\rdd,\,\forall t\in\bR.
 \end{equation}
 Hence \eqref{teoa0} will follow by combining \eqref{agg0}, \eqref{agg1} and \eqref{flowequiv}, provided that  the kernel 
	\begin{equation}\label{agg}
	K(w,z_0):=C_{\tilde{N},N}v_{-\tilde{N}+|s|}( h^{-1/2}(w-z_t))\asymp  C_{\tilde{N},N} v_{-\tilde{N}+|s|}(h^{-1/2}(\chi_t^{-1}(w)-z_0))
	\end{equation}
	(the equivalence is due to the bi-Lipschitz property of the Hamiltonian flow $z_0\to \chi_t(z_0)=z_t$  defined in \eqref{flow}) satisfies Schur's estimates
	\[
	\sup_{w\in h^{1/2}\Lambda}\sum_{z_0\in h^{1/2}\Lambda} |K(w,z_0)|<A_1,\qquad \sup_{z_0\in h^{1/2}\Lambda}\sum_{w\in h^{1/2}\Lambda}|K(w,z_0)|<A_2,
	\]
	with constants $A_1,A_2>0$ independent of $\hbar$. This is clearly the case if $\tilde{N}>2n+|s|$ by the very definition \eqref{agg}. For example, making the change of variables $z'=h^{-1/2}z_0-h^{-1/2}\chi^{-1}_{t}(w)$, we infer
	\begin{align*}\sum_{z_0\in h^{1/2}\Lambda} |K(w,z_0)|&\lesssim \sum_{z_0\in h^{1/2}\Lambda} v_{-\tilde{N}+|s|}(h^{-1/2}(\chi_t^{-1}(w)-z_0))\\
	&= \sum_{z'\in\Lambda-h^{-1/2}\chi^{-1}_{t}(w)} v_{-\tilde{N}+|s|}(z')< A_1
	\end{align*}
	for a constant $A_1>0$ independent of $w$, $h$ and $t$.\par

	The proof of \eqref{teob0} is similar: by the same argument  as in \eqref{agg0} we obtain
	\begin{align*}
\|(i\hbar \partial_t-\widehat{H(t)})&U^{(N)}(t)f\|_{M^{p,\hbar}_{v_s}(\rd)}^p\\
&\leq \sum_{w\in h^{1/2}\Lambda}\Big(\sum_{z_0\in h^{1/2}\Lambda}|\la f ,\th(z_0)\gamma_0^h\rangle|\,| \la R^{(N)}_{z_0}(t,\cdot), \widehat{\Tau}^\hbar(w)\gamma_0^h\ra| \Big)^pv_s(h^{-1/2}w)^p.
\end{align*}
where $ R^{(N)}_{z_0}(t,\cdot)$ is defined in \eqref{agg2}; observe that in \eqref{agg2} we can take out the factor $\hbar^{(N+3)/2}$ which appears in \eqref{teob0}. The pseudodifferential operators appearing in \eqref{agg2} are easily treated observing that, by \eqref{ipotesi}, their symbols together with their derivatives of every order are dominated by $C\langle z\rangle^{N+3}$, with a constant $C$ independent of $t\in[0,T]$ and $z\in\rdd$, so that  they are continuous on $\cS(\rd)$ with operator  seminorms uniformly bounded. The proof therefore goes on as that of \eqref{teoa0}.\par
	Finally let us prove the formula \eqref{teoc0}. We use Duhamel's formula: if $U(t,s)$ is the exact propagator, with $U(s,s)=I$, and $U(t)=U(t,0)$, we have
\[
U^{(N)}(t)f-U(t)f=-\frac{i}{\hbar}\int_0^t U(t,s)\Big(i\hbar\partial_s-\widehat{H(s)} \Big)U^{(N)}(s)f\,ds.
\]

Now the desired result follows at once from \eqref{teob0} and Minkowski inequality, using the fact that $U(t,s)$ is bounded $M^{p,\hbar}_{v_s}(\rd)\to M^{p,\hbar}_{v_s}(\rd)$ with norm uniformly bounded with respect to $\hbar, t,s$, when $0<\hbar\leq 1$, $0\leq s\leq t\leq T$. This was in fact proved in Proposition \ref{pro14}.  \end{proof}
\begin{remark}
(i) Observe that the results of the previous theorem hold true if we replace $v_s$ with a radial weight $m\in\mathcal{M}_{v_s}$. Indeed, we need that $m(z_t)\asymp m(z_0)$, for every $z_0\in\rdd$.\par
(ii) Some technique (interchanging series) of this proof does not apply for $M^{p,q,\hbar}_m(\Ren)$, with $p\not=q$. 
In fact, the result does not hold for  $M^{p,q,\hbar}_m(\Ren)$, with $p\not=q$ in general. A counterexample is provided by the quadratic Hamiltonian $H(x,\xi)=|x|^2$, $(x,\xi)\in\rdd$. In this case $U_N(t)=U(t)$ and  the exact solution $U(t)$ is the multiplication operator $U(t)f(x)=e^{i|x|^2} f(x)$. It was proved in \cite[Proposition 7.1]{fio1} that this operator is not continuous on $M^{p,q}(\Ren)$, with $p\not=q$.\par
(iii) For  $s\in\bR$, $p=q=2$, recall that $M^{2}_{v_s}(\Ren)=\mathcal{Q}_s$, the Shubin-Sobolev spaces \cite{BCG, Shubin91}. Hence the results of the previous theorem hold also for semi-classical Shubin-Sobolev spaces.
\end{remark}

\section*{Acknowledgments} This research was partially supported by  the Gruppo
Nazionale per l'Analisi Matematica, la Probabilit\`a e le loro
Applicazioni (GNAMPA) of the Istituto Nazionale di Alta Matematica
(INdAM), Italy.

\end{document}